\documentclass[11pt,a4paper]{amsart}
\usepackage[margin=25mm]{geometry}
\usepackage[numbers]{natbib}
\usepackage{hyperref}

\usepackage{amssymb,amsmath}
\usepackage{amsthm}
\usepackage{bbm, bm}
\usepackage{stmaryrd}
\usepackage{color}
\usepackage{graphicx}
\usepackage{caption}
\usepackage{float}
\usepackage{wrapfig}
\usepackage{orcidlink}
\usepackage{amsaddr}
\usepackage{cleveref}
\usepackage{enumerate}

\usepackage{abstract}

\input xy
\xyoption{all} 

\numberwithin{equation}{section}

\newtheorem{theorem}{Theorem}[section]

\newtheorem{lemma}{Lemma}[section]
\newtheorem{proposition}{Proposition}[section]

\theoremstyle{definition}
\newtheorem{definition}{Definition}[section]
\newtheorem{example}{Example}[section]

\newenvironment{remark}[1][Remark.]{\begin{trivlist}
\item[\hskip \labelsep {\bfseries #1}]  }{ \end{trivlist}}

\newcommand{\Id}{\mathbbmss{1}}

\DeclareMathOperator{\Hom}{Hom}

\newcommand{\catname}[1]{\textnormal{\texttt{#1}}}

\font\black=cmbx10 \font\sblack=cmbx7 \font\ssblack=cmbx5 \font\blackital=cmmib10  \skewchar\blackital='177
\font\sblackital=cmmib7 \skewchar\sblackital='177 \font\ssblackital=cmmib5 \skewchar\ssblackital='177
\font\sanss=cmss10 \font\ssanss=cmss8 
\font\sssanss=cmss8 scaled 600 \font\blackboard=msbm10 \font\sblackboard=msbm7 \font\ssblackboard=msbm5
\font\caligr=eusm10 \font\scaligr=eusm7 \font\sscaligr=eusm5  \font\fraktur=eufm10
\font\sfraktur=eufm7 \font\ssfraktur=eufm5 
\font\bsymb=cmsy10 scaled\magstep2
\def\all#1{\setbox0=\hbox{\lower1.5pt\hbox{\bsymb
       \char"38}}\setbox1=\hbox{$_{#1}$} \box0\lower2pt\box1\;}
\def\exi#1{\setbox0=\hbox{\lower1.5pt\hbox{\bsymb \char"39}}
       \setbox1=\hbox{$_{#1}$} \box0\lower2pt\box1\;}

\def\tx#1{{\fam0\relax#1}}

\newfam\bifam
\textfont\bifam=\blackital \scriptfont\bifam=\sblackital \scriptscriptfont\bifam=\ssblackital

\newfam\blfam
\textfont\blfam=\black \scriptfont\blfam=\sblack \scriptscriptfont\blfam=\ssblack

\newfam\bbfam
\textfont\bbfam=\blackboard \scriptfont\bbfam=\sblackboard \scriptscriptfont\bbfam=\ssblackboard

\newfam\ssfam
\textfont\ssfam=\sanss \scriptfont\ssfam=\ssanss \scriptscriptfont\ssfam=\sssanss
\def\sss#1{{\fam\ssfam\relax#1}}

\newfam\clfam
\textfont\clfam=\caligr \scriptfont\clfam=\scaligr \scriptscriptfont\clfam=\sscaligr

\newfam\frfam
\textfont\frfam=\fraktur \scriptfont\frfam=\sfraktur \scriptscriptfont\frfam=\ssfraktur

\def\hpb#1{\setbox0=\hbox{${#1}$}
    \copy0 \kern-\wd0 \kern.2pt \box0}
\def\vpb#1{\setbox0=\hbox{${#1}$}
    \copy0 \kern-\wd0 \raise.08pt \box0}

\def\pmb#1{\setbox0\hbox{${#1}$} \copy0 \kern-\wd0 \kern.2pt \box0}
\def\pmbb#1{\setbox0\hbox{${#1}$} \copy0 \kern-\wd0
      \kern.2pt \copy0 \kern-\wd0 \kern.2pt \box0}
\def\pmbbb#1{\setbox0\hbox{${#1}$} \copy0 \kern-\wd0
      \kern.2pt \copy0 \kern-\wd0 \kern.2pt
    \copy0 \kern-\wd0 \kern.2pt \box0}
\def\pmxb#1{\setbox0\hbox{${#1}$} \copy0 \kern-\wd0
      \kern.2pt \copy0 \kern-\wd0 \kern.2pt
      \copy0 \kern-\wd0 \kern.2pt \copy0 \kern-\wd0 \kern.2pt \box0}
\def\pmxbb#1{\setbox0\hbox{${#1}$} \copy0 \kern-\wd0 \kern.2pt
      \copy0 \kern-\wd0 \kern.2pt
      \copy0 \kern-\wd0 \kern.2pt \copy0 \kern-\wd0 \kern.2pt
      \copy0 \kern-\wd0 \kern.2pt \box0}

\mathchardef\za="710B  
\mathchardef\zb="710C  
\mathchardef\zg="710D  
\mathchardef\zd="710E  
\mathchardef\zve="710F 
\mathchardef\zz="7110  
\mathchardef\zh="7111  
\mathchardef\zvy="7112 
\mathchardef\zi="7113  
\mathchardef\zk="7114  
\mathchardef\zl="7115  
\mathchardef\zm="7116  
\mathchardef\zn="7117  
\mathchardef\zx="7118  
\mathchardef\zp="7119  
\mathchardef\zr="711A  
\mathchardef\zs="711B  
\mathchardef\zt="711C  
\mathchardef\zu="711D  
\mathchardef\zvf="711E 
\mathchardef\zq="711F  
\mathchardef\zc="7120  
\mathchardef\zw="7121  
\mathchardef\ze="7122  
\mathchardef\zy="7123  
\mathchardef\zf="7124  
\mathchardef\zvr="7125 
\mathchardef\zvs="7126 
\mathchardef\zf="7127  
\mathchardef\zG="7000  
\mathchardef\zD="7001  
\mathchardef\zY="7002  
\mathchardef\zL="7003  
\mathchardef\zX="7004  
\mathchardef\zP="7005  
\mathchardef\zS="7006  
\mathchardef\zU="7007  
\mathchardef\zF="7008  
\mathchardef\zW="700A  
\mathchardef\zC="7009  

\newcommand{\be}{\begin{equation}}
\newcommand{\ee}{\end{equation}}

\newcommand{\bea}{\begin{eqnarray}}
\newcommand{\eea}{\end{eqnarray}}
\def\*{{\textstyle *}}

\newcommand{\Z}{{\mathbb Z}}

\newcommand{\s}{{\textstyle *}}

\def\Hom{\sss{Hom}}

\def\xi{\tx{i}}

\def\s*{{\scriptstyle *}}

\def\cO{\mathcal{O}}

\newcommand{\beas}{\begin{eqnarray*}}
\newcommand{\eeas}{\end{eqnarray*}}

\title{Affine Supertrusses and Superbraces} 
\author{Andrew James Bruce \,\orcidlink{0000-0001-8197-2263} } 
   \email{andrewjamesbruce@googlemail.com}

   \date{\today}
 
\begin{document}
 \maketitle
\vspace{-20pt}
\begin{abstract}{\noindent Brzeziński's trusses are ``ring-like'' algebraic structures in which the addition is replaced with an abelian heap operation and the binary product satisfies a natural distributivity rule of the ternary product.  The question of how to define ($\Z_2$-graded) super-versions of trusses is addressed in this note. Taking our cue from the theory of algebraic supergroups, we define an affine supertruss as a representable functor from the category of unital associative supercommutative superalgebras to the category of trusses. The representing superalgebras are equipped with a `cotruss' structure--a new concept in itself.  We show that from an affine supertruss one can construct an affine superbrace, and so generalise Rump's braces to supermathematics. As an application of these constructions, we propose a generalisation of the set-theoretic Yang--Baxter equation to the setting of affine superschemes.  }\\
\noindent {\Small \textbf{Keywords:} Heaps;~Trusses;~Supertrusses;~Braces;~Yang--Baxter Maps}\\
\noindent {\Small \textbf{MSC 2020:} \emph{Primary:}~20N10
~~\emph{Secondary:}~14M30;~16T15:~17A70} 
\end{abstract}
\medskip
\begin{flushright}
\emph{``Mathematics is the science which uses easy words for hard ideas.”} \\
 Edward Kasner
\end{flushright}
\tableofcontents
\section{Introduction and Background}
\subsection{Introduction} 
We extend the notion of a truss as recently defined by Brzeziński and collaborators (see \cite{Breaz:2024,Breaz:2025,Brzezinski:2018,Brzezinski:2019,Brzezinski:2020,Brzezinski:2023,Brzezinski:2019b}) to supermathematics. That is, we propose a definition of a $\Z_2$-graded truss, which we refer to as an \emph{affine supertruss}. The motivation for the study of trusses is to find a unified framework to study Rump's braces (see \cite{Rump:2007}) and rings on equal footing. We view a truss as a (two-sided) brace in which the identity element has been disregarded, or as a ring in which the zero has been removed. Thus, trusses can be seen as affine versions of braces and rings, i.e., the privileged elements are missing or not fixed.  Rump introduced braces as a tool in the study of the set-theoretic solutions of the (quantum) Yang--Baxter equation as first proposed by Drinfel'd \cite{Drinfeld:1992}. For a review of the set-theoretic Yang-Baxter equation, the reader may consult Doikou \cite{Doikou:2024}. Given the importance of superrings in supergeometry, mathematical physics, algebraic topology, and homological algebra,  it is natural to wonder if one can construct a super-version of a truss and if that leads to novel generalisations of Yang--Baxter maps.  Heuristically, `superising' an algebraic structure is `liberally peppering' the expressions with plus and minus signs. \par 
A truss is a `ring-like' algebraic structure in which the addition operation is replaced by a ternary operation--technically an abelian heap operation--and the binary product satisfies the left and right distributivity over the ternary `addition'. We stress that the ternary operation of an abelian heap should be thought of as an addition in which the zero element is absent.  Generally, many of the notions and constructions from the theory of rings generalise to trusses, some more directly than others; we direct the reader to the literature already cited. As we do not have a vector space structure underlying a truss, we cannot directly define a `supertruss' by including Koszul sign factors. In particular, a superring decomposes as the direct sum $R = R_\mathbf{0} \oplus R_\mathbf{1}$ of even and odd elements, and from there the appropriate sign factors in an expression can be deduced. However, such a decomposition makes no sense in the setting of trusses, as the addition is replaced with an abelian heap operation.  The solution here is to adopt a categorical approach following the theory of algebraic supergroups, where the structures are understood as representable functors from the category of unital associative supercommutative superalgebras to the category of groups.   Thus, affine supertrusses are defined as representable functors from the category of unital associative supercommutative superalgebras to the category of trusses. \par 
Naturally, the representing superalgebra of an affine supertruss comes with the structure of what we call a \emph{supercotruss}.  That is, the superalgebra representing an affine supertruss comes with a pair of maps encoding the dual notion of binary multiplication and ternary addition.  The philosophy is that a cotruss is to a truss what a Hopf algebra is to a group. A little more carefully, a supercotruss is a superalgebra that has the structure of a coalgebra and an abelian quantum heap, together with a suitable compatibility condition. Quantum heaps we introduced by Škoda \cite{Skoda:2007}, and we comment that a related notion of a ternary Hopf algebra was introduced by Borowiec, Dudek, \&  Duplij \cite{Borowiec:2001}.  The principal distinction between quantum heaps and ternary Hopf algebras is the precise form of the ternary coassociativity. \par
By adopting this categorical/functorial approach, we have left the pure set-theoretic world and entered the world of affine superschemes (see Carmeli et al. \cite[Chapter 10]{Carmeli:2011}, for example). Thus, geometrically, supercotrusses are understood as the coordinate rings on a superspace that, via its functor of points, has the structure of a truss. However, we will not push this geometric understanding in this note and  generally focus on the algebraic/categorical aspects of the theory. \par     
We show that Brzeziński's construction of a two-sided semi-brace from a unital truss generalises to the categorical framework presented in this note (see \cite{Brzezinski:2020} for the classical construction). That is, via affine supertrusses, we are able to construct super-versions of Rump's (semi-)braces, which we refer to as \emph{affine superbraces}.  Similarly to trusses, there is no underlying vector space structure associated with a (two-sided) brace, and thus, directly `superising' braces is impossible. \par 
We use affine superbraces to show how to construct (families) of solutions to the set-theoretic Yang--Baxter equation following Rump \cite{Rump:2007}.  However, it must be stressed that the Yang--Baxter equation in question is not `super', i.e., it is not an equation for a graded linear operator acting on a tensor product of graded modules, and thus involves no Koszul sign-rule in its braiding. The `superness' is internal, and our constructions generalise the set-theoretic Yang--Baxter equation and Yang--Baxter maps to the world of affine superschemes. We suggest that these notions may find applications in discrete integrable systems with fermionic/anticommuting degrees of freedom.  In particular, Grassmann extensions of Yang--Baxter maps originate in the work of 
Grahovski,  Konstantinou-Rizos \&  Mikhailov (see \cite{Grahovski:2016}) using Lax representations.  The  simplest starting place here are sets of the form
$$V_\Lambda :=  \{(\mathsf{x}, \vartheta ~~|~~ \mathsf{x}\in \Lambda_\mathbf{0}\,, \vartheta \in \Lambda_\mathbf{1} ) \}\,,$$ 
where $\Lambda = \Lambda_\mathbf{0} \oplus \Lambda_\mathbf{1}$ is a fixed finite-dimensional Grassmann algebra.  The physical interpretation is that the dynamical variables in a discrete system are valued in a Grassmann algebra.  Grassmann extended Yang--Baxter maps are then standard set-theoretical Yang--Baxter maps $ S: V_\Lambda \times V_\Lambda \rightarrow V_\Lambda \times V_\Lambda$ that preserve the $\Z_2$-grading. We recognise that $V_\Lambda \cong \mathbb{K}^{1|1}(\Lambda) \cong \Hom_{\catname{SAlg}_{\mathbb{K}}}(\mathbb{K}[x, \theta], \Lambda)$, where $\mathbb{K}$ is a field (or more generally a unital commutative ring) and $\mathbb{K}[x, \theta]$ is the polynomial algebra in even and odd formal variable $x$ and $\theta$, where $\theta^2 =0$. Thus, the Grassmann extended sets are best understood, in the author's opinion, in terms of Schwarz \& Voronov's $\Lambda$-points (see \cite{Schwarz:1984,Voronov:1984}) where the underlying superspace is $\mathbb{K}^{1|1}$, i.e., the superspace with  one even and one odd coordinate.  Our use of affine superbraces to construct Yang--Baxter maps compliments the construction of Grassmann extended Yang--Baxter maps. In our constructions, Grassmann algebras are replaced with general superalgebras and as we are working functorially, the superalgebras are not fixed. This sits comfortably with the philosophy that a choice of Grassmann algebra to parameterise a physical system has no physical physical outcomes.   However, in this note, we will not attempt to make explicit links between the approach suggested here and the established examples of Grassmann extensions of Yang--Baxter maps. The warning here is that not all Yang--Baxter maps come from semi-braces and so it may not be possible to reformulate all Grassmann extension of Yang--Baxter maps in terms of affine supertrusses.   \par 
We establish that abelian supergroups and finite-dimensional superrings provide examples of affine supertrusses, generalising the classical examples to supermathematics.  To illustrate the formalism, we present selected examples where we can explicitly examine the truss and semi-brace structures, including writing out simple Yang--Baxter maps. These examples show that one can work informally or symbolically with ``algebra-valued coordinates/generators'' as is common for physicists working with supergeometry. \par 
\noindent \textbf{Key Results:}
\begin{enumerate}[i)]
\item Theorem \ref{trm:RepObs}  establish the properties of the representing objects of affine supertrusses, which we call supercotrusses.  
\item Theorem \ref{trm:EqCat} establishes an anti-equivalence between the categories of affine supertrusses and supercotrusses.
\item Proposition \ref{prop:BraceNat} generalises  Brzeziński's construction of a two-sided semi-brace from a unital truss to the setting of affine supertrusses.
\end{enumerate}
%
%
\subsection{Motivation} There has been a renewed interest in ternary structures in algebra and geometry. Generically, a heap can be viewed as a group in which the identity element has been removed or forgotten. Heuristically,  we view a heap as an affine generalisation of a group.  By picking an element of the heap, we can construct a group in which that element acts as the identity; this is known as the retraction of the heap.  All such groups are isomorphic. However, we do not have an equivalence of categories, as a point needs to be selected. We view selecting an identity as analogous to choosing an observer or picking a gauge in physics.  If one wants to develop physics without reference to observers, heaps and related ternary structures offer a mathematical route. \par 
For instance, Breaz et al. \cite{Breaz:2024} demonstrated that affine spaces can be defined without reference to vector spaces by using two ternary operations; one entirely on the set and the other representing an action of the field of scalars. A gauge or frame-independent formulation of analytical dynamics requires affine bundles and, because of this, Grabowska et al. (see  \cite{Grabowska:2003}) defined Lie brackets on sections of affine bundles. Brzeziński \cite{Brzezinski:2022} developed a frame-independent reformulation of Lie brackets on affine bundles using ternary structures.\par 
Generically, we view algebraic structures based on heaps, and closely related ternary operations, as `affine algebraic structures' in the sense that some privileged element is missing or better stated, not fixed. Thus, developing ternary algebraic structures is seen as part of developing structures that do not rely on an underlying vector space but are, in some sense, affine in nature. Developing such ternary structures offers a new perspective on binary algebraic structures, affine structures, and potentially physics.
%
%
\subsection{Preliminaries} We quickly review the basic theory of heaps and trusses as needed later. Our main reference for heaps and related structures is the book by Hollings \& Lawson \cite{Hollings:2017}. For categorical notions, we refer to Mac Lane \cite{MacLane:1988}.
\begin{definition}[\cite{Baer:1929,Prufer:1924}]
A \emph{heap} is a set $H$, together with a ternary operation
$$[-,-,-] : H \times H \times H \rightarrow H\,,$$
that satisfies 
\begin{align}\label{eq:HeapCons}
& [[a,b,c], d, e] = [a,b,[c,d,e]]\,, &&[a,b,b] = [b,b, a] =a\,,
\end{align}
for all $a,b,c,d$ and $e \in H$.  A \emph{morphism of heaps} is a map $\phi : H \rightarrow H'$ that preserves the ternary operation, i.e.,
$$\phi [a,b,c]_{H} = [\phi(a), \phi(b), \phi(c)]_{H'}\,.$$
A heap is said to be an \emph{abelian heap} if for all $a,b$ and $c \in H$
$$[a,b,c] = [c,b,a]\,.$$
\end{definition}
 We will denote the category of heaps as $\catname{Heap}$, and the category of abelian heaps as $\catname{AbHeap}$.   Note that for any heap, we can also deduce the equality 
\begin{equation}
[[a,b,c], d,e] = [a, [d,c,b],e] = [a,b, [c,d,e]]\,,
\end{equation}
which is referred to as para-associativity. Other useful results include the following. If $[a,b,c]= d$, then  $a= [d,c,b]$, and $a=b$ if and only if $[a,b,c]=c$, for all $a,b$ and $c\in H$. Moreover, for abelian heaps, we have the \emph{transposition rule} (\cite[Lemma 2.3]{Brzezinski:2020})
\begin{equation}\label{Eqn:TransRul}
[[a_1, a_2, a_3], [b_1, b_2, b_3], [c_1, c_2, c_3]] = [[a_1, b_1, c_1], [a_2, b_2, c_2], [a_3, b_3, c_3]]\,. 
\end{equation} 
\begin{example}\label{exa:HeapG}
Given any group $G$, it can be `heapified' by defining $[g_1, g_2, g_3] := g_1 g_2^{-1} g_3$. 
\end{example}
Recall that given any heap $(H, [-,-,-])$, by selecting an element $b\in H$, we can `groupify' the heap by defining $a\cdot c := [a,b,c]$ and $a^{-1} = [b,a,b]$. For an abelian heap, we will use additive notation $+_c$  and $-_c$.  It is a well-known fact that all heaps arise via the `heapification' of a group.  However, we do not have an equivalence of categories, as the selection of an element to `groupify' is not natural. However, there is an isomorphism of categories $\catname{Heap}_* \cong \catname{Grp}$, where we consider pointed heaps and homomorphisms that preserve the privileged points. 
\begin{definition}[\cite{Brzezinski:2019}]\label{def:Truss}
A \emph{truss} is an abelian heap $T$ together with an associative binary operation (multiplication) that distributes over the heap operation, i.e.,
\begin{align*}
& s[t_1, t_2, t_3] =  [st_1, st_2, st_3]\, , ~~ \textnormal{and}
& [t_1, t_2 , t_3]s = [t_1 s, t_2 s, t_3 s]\,,
\end{align*} 
for all $t_1, t_2, t_3$ and $s \in T$.\par 
A truss is said to be a \emph{unital truss} if there is a neutral element for the multiplication.  An element $0 \in T$ is called an \emph{absorber} or \emph{zero} if $0t = 0 = t 0$, for all $t \in T$.\par  
A \emph{truss morphism} is a heap morphism that respects the multiplication. If the trusses are unital, then the morphism must respect the units.  The resulting category we denote as $\catname{Truss}$. We will denote the subcategories of unital trusses and trusses with a zero as  $\catname{Truss}_1$ and  $\catname{Truss}_0$, respectively.  
\end{definition}
Clearly, we have two forgetful functors 
\begin{equation}\label{eqn:ForFunc}
U : \catname{Truss} \rightarrow \catname{AbHeap}\,, \qquad  V : \catname{Truss} \rightarrow \catname{SemiGrp}\,,
\end{equation}
from the category of trusses to the categories of abelian heaps and semigroups, respectively.  
\begin{remark}
The underlying heap operation of a truss needed to be abelian as the transposition rule ensures that the product of $[t_1, t_2, t_3][s_1, s_2, s_3]$ is unambiguous. 
\end{remark}
 We define $s+_t u: = [s,t, u]$ and $-_t u := [t,u,t]$, and then following  Brzeziński \cite[Lemma 3.9]{Brzezinski:2020} we have
\begin{enumerate}
\item If $T$ is unital, then $+_1$ together with the binary product defines a semi-brace structure on the set $T$, i.e., for all $s, t$ and $u \in T$,
$$s(t+_1 u) = st -_1 s +_1 su\,, \qquad (t+_1 u)s = ts -_1s +_1 us\,.$$
\item If $0 \in T$ is an absorber, then  $+_0$ together with the binary product defines a ring structure on the set $T$. 
\end{enumerate}
Observe that we have the ``braceification'' and  ``ringification'' functors 
\begin{subequations}
\begin{align}\label{eqn:BrFunc}
& Br : \catname{Truss}_1 \rightarrow \catname{SemiBrac}\,,\\ \label{eqn:RnFunc}
& Rn : \catname{Truss}_0 \rightarrow \catname{Rng}\,,
\end{align}
\end{subequations}
respectively. Homomorphisms remain unchanged and are just considered in a different category.  \par
We also have a ``trussification'' functor $Tr : \catname{Rng} \rightarrow  \catname{Truss}_0 $ by defining  $[s,t,u] := s-t-u$ (see Example \ref{exa:HeapG}) and using the ring product.  We have the following important result.
\begin{lemma}\label{lem:Trussification}
The categories $\catname{Rng}$ of rings and $\catname{Truss}_0$ of trusses with an absorber are isomorphic.  
\end{lemma}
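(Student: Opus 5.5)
We prove the lemma by checking that the two functors already introduced, the trussification functor $Tr : \catname{Rng} \rightarrow \catname{Truss}_0$ and the ringification functor $Rn : \catname{Truss}_0 \rightarrow \catname{Rng}$, are mutually inverse. Both act as the identity on underlying sets and on morphisms. It is then enough to show that each is well defined on objects and morphisms, and that the two round-trip composites return exactly the same structure, not merely an isomorphic one.

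\textbf{Step 1: $Tr$ is well defined.} For a ring $R$, the bracket $[s,t,u]=s-t+u$ is the heapification of the abelian group $(R,+)$, so it is an abelian heap. Distributivity follows from that of the ring: $r(s-t+u)=rs-rt+ru$, and similarly on the right. The ring zero is an absorber, since $0t=0=t0$. A ring homomorphism preserves $s-t+u$ and the product. It also sends $0$ to $0$, which is what we need if morphisms in $\catname{Truss}_0$ are required to preserve absorbers. In any case, a truss morphism $\phi$ between trusses with absorbers automatically satisfies $\phi(0)=\phi(0\cdot 0)$, but we do not need this.

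\textbf{Step 2: $Rn$ is well defined.} For a truss $T$ with absorber $0$, retract the heap at $0$: set $s+_0 u=[s,0,u]$ and $-_0u=[0,u,0]$. By the standard retraction of an abelian heap this gives an abelian group with identity $0$. For distributivity, $s(t+_0u)=[st,s0,su]=[st,0,su]=st+_0su$, using that $0$ absorbs; the right-hand case is symmetric. This is the ring structure cited from \cite[Lemma 3.9]{Brzezinski:2020}. For morphisms, we need $\phi(0)=0'$. I expect this to be the main subtle point: it holds either by the definition of $\catname{Truss}_0$, or we derive it. For the derivation, $\phi(0)$ absorbs the image $\phi(T)$, but that alone does not make it equal to $0'$, so we must assume morphisms in $\catname{Truss}_0$ preserve the absorber, and we will state this explicitly. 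Granting it, $\phi([s,0,u])=[\phi s,0',\phi u]$, so $\phi$ is a ring homomorphism.

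\textbf{Step 3: the composites are identities.} The product is untouched in both directions. For $Rn\circ Tr$: the absorber of $Tr(R)$ is $0$, since absorbers are unique when they exist ($0=0\,0'=0'$). Hence $s+_0u=s-0+u=s+u$. For $Tr\circ Rn$: we must check $[s,t,u]=s-_0t+_0u$. This is the identity $[s,t,u]=[[s,0,[0,t,0]],0,u]$, which follows from para-associativity together with $[0,t,0]$ being the inverse of $t$, giving $[s,0,[0,t,0]]=[s,t,0]$ and then $[[s,t,0],0,u]=[s,t,u]$. Since both composites are the identity on morphisms as well, this establishes an isomorphism of categories.
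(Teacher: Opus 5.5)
Your proof is correct and is essentially the paper's argument made explicit: the paper's sketch lifts the isomorphism $\catname{AbGrp}\cong\catname{AbHeap}_*$ to $\catname{Rng}\cong\catname{Truss}_0$ by observing that the product is unchanged. You add the verifications the sketch leaves implicit: distributivity in both directions, that both round trips are identities, uniqueness of the absorber (so the pointing is canonical), and that morphisms in $\catname{Truss}_0$ must preserve the absorber, which is exactly what the paper's passage through pointed heaps encodes.
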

\begin{proof}(Sketch) It is well known that there is an isomorphism of categories between $\catname{AbGrp}$ and $\catname{AbHeap}_*$ (pointed abelian heaps). As the binary product remains unchanged, this isomorphism lifts to  $\catname{Rng}$ and $\catname{Truss}_0$.
\end{proof}

Rump's theorem (see \cite{Rump:2007}) and its generalisations state that solutions to the set-theoretic Yang--Baxter equation can be built from a (semi-)brace. For a map $r : T \times T \rightarrow T \times T$, where we have $r(s, t) = (\lambda_s(t), \rho_t(s))$, the braid relation form of the \emph{set-theoretic Yang--Baxter equation} (see \cite{Drinfeld:1992}) is
\begin{equation}\label{eqn:SetYB}
(r \times \Id_T)(\Id_T \times r)(r \times \Id_T) = (\Id_T \times r)(r \times \Id_T)(\Id_T \times r)\,.
\end{equation}
Any such map $r$ that satisfies \eqref{eqn:SetYB} is said to be a \emph{Yang--Baxter map} following Veselov \cite{Veselov:2003} (also see Buchstaber \cite{Buchstaber:1998}). If $\lambda_s$ and $\rho_s$ are bijections, then $r$ is said to be a \emph{non-degenerate solution}. Importantly, the maps $\lambda$ and $\rho$ can be built from the two-sided semi-brace associated with a unital truss. It is well known that for $r$ to be a solution (see for example \cite{Doikou:2024}), we require 
\begin{subequations}
\begin{align}\label{eqn:YB1}
& \lambda_s \big(\lambda_t(u) \big) = \lambda_{\lambda_s(t)}\big(  \lambda_{\rho_t(s)}(u)\big) \,,\\ \label{eqn:YB2}
& \rho_u\big(\rho_t(s) \big) = \rho_{\rho_u(t)}\big(\rho_{\lambda_t(u)}(s) \big)\,,\\ \label{eqn:YB3}
& \lambda_{\rho_{\lambda_t(u)}(s)}\big(\rho_u(t) \big) = \rho_{\lambda_{\rho_s}(u)}\big( \lambda_s(t)\big)\,,
\end{align}
\end{subequations} 
for all $s,t$ and $u \in T$. \par
We will assume the reader has a working knowledge of $\Z_2$-graded algebra. The Grassmann parity of an object will be denoted by `tilde', i.e., $\widetilde{O} \in \{\mathbf{0},\mathbf{1} \} = \Z_2$. Recall that an associative supercommutative superalgebra is an associative algebra (over a unital commutative ring $\mathbb{K}$) that has a decomposition  $A = A_\mathbf{0} \oplus A_\mathbf{1}$, such that $A_i A_j \subseteq A_{i+j}$ (read mod $2$), and $ab = (-1)^{\widetilde{a}\, \widetilde{b}}\, ba$ for all $a,b \in A$. For brevity, we refer to superalgebras. 
\begin{example}
The Grassmann algebra  $\mathbb{K}[\theta^1, \cdots , \theta^n]$, is the algebra of polynomials in formal generators $\theta^i$, subject to $\theta^i \theta^j = - \theta^j \theta^i$, which implies $(\theta^i)^2 =0$. $\mathbb{K}_\mathbf{0}[\theta^1, \cdots , \theta^n]$ consists of polynomials of even degree in $\theta$s, and $\mathbb{K}_\mathbf{1}[\theta^1, \cdots , \theta^n]$ consists of polynomials of odd degree in $\theta$s.
\end{example}
\begin{example}
Similarly to the previous example, the polynomial algebra $\mathbb{K}[x^1, \cdots , x^m, \theta^1, \cdots , \theta^n] = \mathbb{K}[x^1, \cdots, x^m]\otimes_\mathbb{K} \mathbb{K}[\theta^1, \cdots , \theta^n]$, where $\widetilde{x}^a = 0$ and $\widetilde{\theta}^i =1$, subject to the relations $x^a x^b = x^b x^a$, $x^a \theta^i = \theta^i x^a$, and   $\theta^i \theta^j = - \theta^j \theta^i$, is a superalgebra.
\end{example}
\begin{example}
The cohomology ring of a topological space (via the cup product) is a superalgebra. 
\end{example}
Given two superalgabras $A$ and $B$, the tensor product $A \otimes B$ is an algebra with the multiplication  being defined as $(a_1 \otimes b_1)(a_2 \otimes b_2) := (-1)^{\widetilde{b}_1 \, \widetilde{a}_2}\, (a_1 a_2 \otimes b_1 b_2)$. The reader may consult \cite[Chapter 1.1]{Carmeli:2011} for details.\par 
We will not employ the machinery of sheaves and superspaces explicitly in this note. For an introduction to superalgebras and supergeometry, the reader may consult Carmeli et al. \cite{Carmeli:2011},  Leites \cite{Leites:1980}, Manin \cite{Manin:1997}, and Varadrajan \cite{Varadrajan:2004}, for example.  We remark that Lie superheaps were defined by the author in \cite{Bruce:2025}.
%
%
\section{Affine Supertrusses, their Morphisms and Examples}
\subsection{Affine Supertrusses}
For the various categorical notions employed in this work, the reader may consult Mac Lane \cite{MacLane:1988}. Our reference for coalgebras is Brzeziński \& Wisbauer \cite[Chapter 1]{Brezinski:2003}. Let $\mathbb{K}$ be a unital commutative ring. We denote the category of unital associative supercommutative superalgebras over  $\mathbb{K}$ as $\catname{SAlg}_{\mathbb{K}}$, and the category of trusses we denote as $\catname{Truss}$. For brevity, we will simply refer to superalgebras.  Recall that morphism of superalgebras preserve the $\Z_2$-grading (Grassmann parity). 
\begin{definition}\label{def:AffSupTrus}
An \emph{affine supertruss (scheme)} $T$ is a representable functor 
$$T : \catname{SAlg}_{\mathbb{K}} \rightarrow \catname{Truss}\,.$$
A \emph{morphism of affine supertrusses} is a natural transformation $\phi_- : T(-) \rightarrow T'(-)$. The resulting (functor) \emph{category of affine supertrusses} is defined as $\catname{ASTru}_{\mathbb{K}} := [\catname{SAlg}_{\mathbb{K}}, \catname{Truss}]_{\mathrm{rep}}$, where  we restrict to the full subcategory of $[\catname{SAlg}_{\mathbb{K}}, \catname{Truss}]$ consisting of representable functors. 
\end{definition}
\begin{remark}
Brzeziński \cite[Lemma 3.57]{Brzezinski:2020} showed that the set of maps from a set to a truss is itself a truss; the resulting truss is referred to as a \emph{mapping truss}. In this note, we go further and replace the set with a superalgebra. We may drop representability in Definition \ref{def:AffSupTrus} and consider \emph{generalised supertrusses}, however, we will not do so here. 
\end{remark}
Decoding this definition, given any superalgebra $A \in \catname{SAlg}_{\mathbb{K}}$, then  $T(A)$ is a truss. Representable means that $T(-) \cong h^X(-) := \Hom_{\catname{SAlg}_{\mathbb{K}}}(X, -)$  for some superalgebra $X \in \catname{SAlg}_{\mathbb{K}}$. A representation is a choice of a superalgebra $X$ and a natural isomorphism. With some minor abuse of language, we may refer to $X$ as \emph{the} representing superalgebra. Describing $X$ via the functor $T$ is an example of what is known as Grothendieck's functor of points in the algebraic (super)geometry literature.    \par 
Given two superalgebras $A, B \in \catname{SAlg}_{\mathbb{K}}$ and a superalgebra homomorphism $\psi \in \Hom_{\catname{SAlg}_{\mathbb{K}}}(A,B)$, we have a truss homomorphism 
$$T(\psi) : T(A) \rightarrow T(B)\,.$$
Explicitly, for any $s \in T(A)$, we compose with $\psi$ to obtain $\psi \circ s \in T(B)$. \par 
The representing superalgebra $X$ must, of course, have the structure of a ``cotruss'' to ensure $h^X$ lands in the category of trusses rather than just sets.  In particular, $X$ is required to admit a pair of superalgebra homomorphisms\footnote{As we are concerned with algebraic constructions, the tensor products in this note are algebraic, and we do not consider topological completions.}
 \begin{equation}
 \Delta^{(2)} : X \rightarrow X \otimes X\,, \qquad \Delta^{(3)} : X \rightarrow X \otimes X \otimes X\,, 
 \end{equation}
which we refer to as the binary and ternary comultiplication, respectively.  Note that the tensor product (over $\mathbb{K}$) is the $\Z_2$-graded tensor product.  We require  $(X, \Delta^{(3)})$ to be abelian quantum heap (see Škoda \cite{Skoda:2007}), thus
\begin{subequations}
\begin{align}\label{eqn:Con1}
& (\Id \otimes \Id \otimes \Delta^{(3)})\circ \Delta^{(3)} = (\Delta^{(3)} \otimes \Id \otimes \Id)\circ \Delta^{(3)}\,,\\ \label{eqn:Con2}
&(\Id \otimes m_X) \circ \Delta^{(3)} = \Id \otimes 1_X \,, \\ \label{eqn:Con3}
&(m_X \otimes \Id) \circ \Delta^{(3)} = 1_X \otimes \Id\,, \\ \label{eqn:Con4}
& \Delta^{(3)} = \sigma_{13}\circ \Delta^{(3)}\,.
\end{align}
The pair $(X, \Delta^{(2)})$ must be a (non-counital) coassociative coalgebra, thus
\begin{equation}\label{eqn:Con5}
(\Delta^{(2)}\otimes \Id ) \circ \Delta^{(2)} = (\Id \otimes \Delta^{(2)})\circ \Delta^{(2)}\,.
\end{equation}
The compatibility conditions between the two structures  are  left and right co-distributivity
\begin{align}\label{eqn:Con6}
&(\Id \otimes \Delta^{(3)})\circ \Delta^{(2)} = m_X^{135} \circ (\Delta^{(2)}\otimes \Delta^{(2)}\otimes \Delta^{(2)} ) \circ \Delta^{(3)}\,, \\ \label{eqn:Con7}
&(\Delta^{(3)} \otimes \Id )\circ \Delta^{(2)} = m_X^{246}\circ(\Delta^{(2)}\otimes \Delta^{(2)}\otimes \Delta^{(2)} ) \circ \Delta^{(3)}\,,
\end{align} 
where,  acting on a basic homogeneous element $x = x_1 \otimes x_2 \otimes \cdots \otimes x_6$, we have defined
\begin{align*}
&m_X^{135}(x) = (-1)^{\varepsilon} (x_1 x_3 x_5) \otimes x_2 \otimes x_4 \otimes x_6\,, \\
&m_X^{246}(x) = (-1)^{\varepsilon} x_1 \otimes x_3 \otimes x_5 \otimes (x_2 x_4 x_6)\,,
\end{align*}
where the sign factor is $\varepsilon := \widetilde{x_2}\,\widetilde{x_3} + \widetilde{x_5}\, \widetilde{x_4} + \widetilde{x_5}\,\widetilde{x_2}$.\par 
The connection with the truss operations is as follows. Consider $t_1, t_2, t_3 \in T(A)$ for some $A \in \catname{SAlg}_{\mathbb{K}}$, then 
\begin{align}
& X 	\xrightarrow{\Delta^{(2)}} X \otimes X \xrightarrow{(t_1\otimes t_2)} A \otimes A \xrightarrow{m^{(2)}_A} A\,,\\ \nonumber
& X 	\xrightarrow{\Delta^{(3)}} X \otimes X \otimes X \xrightarrow{(t_1 \otimes t_2 \otimes t_3)} A \otimes A \otimes A \xrightarrow{m^{(3)}_A} A\,,
\end{align} 
describe the binary product and the abelian heap operation. Here we employ the notation $m^{(i)}_A : A^{\otimes i} \rightarrow A$  as the (ordered) multiplication of $i$ elements.  More explicitly, we write 
\begin{subequations}
\begin{align}\label{eqn:BiProd}
&st := m^{(2)}_A \circ (s \otimes t) \circ \Delta^{(2)}\,,\\ \label{eqn:TerAdd}
& [t_1, t_2, t_3] := m^{(3)}_A \circ (t_1 \otimes t_2 \otimes t_3)  \circ \Delta^{(3)}\,.
\end{align}
\end{subequations}
If $X$ admits a counit, then the resulting truss $T(A)$ is unital.\par 
Morphisms of affine supertrusses are defined as natural transformations.  This means that for any $A\in \catname{SAlg}_\mathbb{K}$, the component $\phi_A : T(A) \rightarrow T'(A)$ is a truss morphism (see Definition \ref{def:Truss}), i.e., a heap morphism that respects the binary multiplication (and units if we restrict to unital affine supertrusses). \par 
\begin{theorem}\label{trm:RepObs}
Let $T \in \catname{ASTru}_{\mathbb{K}}$ be an affine supertruss, then the representing object is a triple $(X, \Delta^{(2)}, \Delta^{(3)})$, where $X \in \catname{SAlg}_{\mathbb{K}}$, and 
$$\Delta^{(2)} : X \rightarrow X \otimes X\,, \qquad \Delta^{(3)} : X \rightarrow X \otimes X \otimes X\,,$$
are superalgebra homomorphism  that satisfy the conditions \crefrange{eqn:Con1}{eqn:Con7}.
\end{theorem}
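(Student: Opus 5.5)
The proof will run through the Yoneda lemma in the form used for algebraic supergroups. Since $T \cong h^X$ and products of representables are representable, we have $h^X \times h^X \cong h^{X\otimes X}$ and $h^X\times h^X\times h^X \cong h^{X\otimes X\otimes X}$. This uses the fact that the graded tensor product is the coproduct in $\catname{SAlg}_{\mathbb{K}}$, which is where supercommutativity enters. The truss multiplication and the heap operation are natural in $A$, because each $T(\psi)$ is a truss morphism. They therefore give natural transformations $h^{X\otimes X}\to h^X$ and $h^{X\otimes X\otimes X}\to h^X$. By Yoneda, these are induced by unique superalgebra homomorphisms $\Delta^{(2)}:X\to X\otimes X$ and $\Delta^{(3)}:X\to X^{\otimes 3}$. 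Unwinding the identification $(s,t)\mapsto m_A^{(2)}\circ(s\otimes t)$ recovers exactly the formulas \eqref{eqn:BiProd} and \eqref{eqn:TerAdd}.

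Next, each truss axiom is an equality of two natural transformations $h^{X^{\otimes n}}\to h^{X^{\otimes k}}$ between representables. By Yoneda, such an equality holds if and only if it holds on the universal element $\mathrm{id}_{X^{\otimes n}}$, that is, with $A=X^{\otimes n}$ and $t_i$ the canonical inclusions $\iota_i:X\to X^{\otimes n}$. I would check the axioms one at a time in this form.
\begin{enumerate}[(i)]
\item Heap associativity $[[a,b,c],d,e]=[a,b,[c,d,e]]$ with $n=5$ gives \eqref{eqn:Con1}.
\item The identity $[a,b,b]=a$ means $m^{(3)}\circ(\iota_1\otimes\iota_2\otimes\iota_2)\circ\Delta^{(3)}=\iota_1$ in $X\otimes X$. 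Factoring the left side through $\Id\otimes m_X$ gives \eqref{eqn:Con2}, since $\iota_1=\Id\otimes 1_X$.
\item Symmetrically, $[b,b,a]=a$ gives \eqref{eqn:Con3}.
\item Abelianness gives \eqref{eqn:Con4}, using that $\sigma_{13}$ realises swapping $t_1$ and $t_3$ with the appropriate Koszul sign.
\item Associativity of the product gives coassociativity \eqref{eqn:Con5}.
\end{enumerate}
Conversely, each of these identities implies the corresponding axiom for every $A$ and every points $t_i$, by precomposition. So the conditions are also sufficient; I would note this, though the theorem only asserts necessity.

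The main obstacle is the two distributivity laws, which produce \eqref{eqn:Con6} and \eqref{eqn:Con7}. Take $s[t_1,t_2,t_3]=[st_1,st_2,st_3]$ with $n=4$, $s=\iota_1$ and $t_i=\iota_{i+1}$. The left side is $m^{(2)}\circ(\iota_1\otimes m^{(3)}(\iota_2\otimes\iota_3\otimes\iota_4)\Delta^{(3)})\circ\Delta^{(2)}$, which is $(\Id\otimes\Delta^{(3)})\circ\Delta^{(2)}$. The right side first applies $\Delta^{(3)}$, then $\Delta^{(2)}$ to each of the three factors, producing six tensor slots $x_1\otimes\cdots\otimes x_6$. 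Slots $1,3,5$ are then mapped to the first copy and multiplied there, while slots $2,4,6$ go to copies $2,3,4$. To reach $(x_1x_3x_5)\otimes x_2\otimes x_4\otimes x_6$ one must move $x_3$ past $x_2$, and $x_5$ past $x_4$ and $x_2$. This has to be done carefully in the graded tensor product, and it should yield exactly $\varepsilon=\widetilde{x_2}\widetilde{x_3}+\widetilde{x_5}\widetilde{x_4}+\widetilde{x_5}\widetilde{x_2}$. The right distributivity law is handled the same way. There $s$ is the last inclusion, the odd slots stay in place and the even slots are gathered into the final factor, giving $m_X^{246}$. Verifying that the sign bookkeeping matches the stated $\varepsilon$ in both cases is the delicate step; the rest is formal Yoneda.
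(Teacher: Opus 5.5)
Your proposal is correct and is essentially the paper's argument. In both, the only non-routine step is rewriting the two sides of $s[t_1,t_2,t_3]=[st_1,st_2,st_3]$ as $(\Id\otimes\Delta^{(3)})\circ\Delta^{(2)}$ and $m_X^{135}\circ(\Delta^{(2)}\otimes\Delta^{(2)}\otimes\Delta^{(2)})\circ\Delta^{(3)}$. Your transposition count does give exactly $\varepsilon$ in both cases: move $x_3$ past $x_2$ and $x_5$ past $x_4$ and $x_2$ for the left law, and $x_2$ past $x_3,x_5$ and $x_4$ past $x_5$ for the right law.

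You are more complete than the paper in two places. You produce $\Delta^{(2)},\Delta^{(3)}$ by Yoneda, which the paper only does in the proof of Theorem~\ref{trm:EqCat}. You also justify necessity of each condition by evaluating at the universal element $A=X^{\otimes n}$, $t_i=\iota_i$. The paper leaves that step implicit in ``insisting on left distributivity requires'', and for \eqref{eqn:Con1}--\eqref{eqn:Con5} it argues only sufficiency.
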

\begin{proof}
Take some arbitrary $A \in \catname{SAlg}_{\mathbb{K}}$ and consider the set of superalgebra homomorphisms $T(A)$.  The conditions \crefrange{eqn:Con1}{eqn:Con4} on $\Delta^{(3)}$ directly imply $T(A)$ has the structure of an abelian heap \par 
From the standard theory of coalgebras, we observe that \ref{eqn:Con5} implies that $T(A)$ has the structure of an associative binary product. \par 
The only non-standard condition to check is \crefrange{eqn:Con6}{eqn:Con7} that encodes distributivity.  We will establish left distributivity only, as the right follows from near identical calculations. \par 
Using \eqref{eqn:BiProd} and \eqref{eqn:TerAdd}, we have 
\begin{align*}
s [t_1,t_2,t_3] & = m^{(2)}_A \circ \left( s \otimes \big( m^{(3)}_A \circ (t_1\otimes t_2 \otimes t_3)\circ \Delta^{(3)} \big)    \right) \circ \Delta^{(2)}\\
&= m^{(4)}_A \circ (s \otimes t_1 \otimes t_2, \otimes t_3) \circ (\Id \otimes \Delta^{(3)})\circ \Delta^{(2)}\,.
\end{align*}
On the other hand,
\begin{align*}
&[st_1, st_2, s t_3]=\\ 
&m^{(3)}_A \circ \left(\big(m^{(2)}_A \circ (s \otimes t_1)\circ \Delta^{(2)} \big)\otimes \big(m^{(2)}_A \circ (s \otimes t_2)\circ \Delta^{(2)} \big) \otimes \big(m^{(2)}_A \circ (s \otimes t_3)\circ \Delta^{(2)} \big)\right)\circ \Delta^{(3)}\,.
\end{align*} 
As $A$ is a supercommutative superalgebra and $s \in \Hom_{\catname{SAlg}_{\mathbb{K}}}(X, A)$ is a superalgebra homomorphism, we can factor out the copies of $s$ and group their multiplication in $A$. This is equivalent to applying $s$ to a product of elements in $X$. Specifically, we multiply $s$ with the first, third, and fifth tensor factors generated by $\Delta^{(2)} \otimes \Delta^{(2)} \otimes \Delta^{(2)}$. This grouped multiplication is exactly the map defined as $m_X^{135}$ (assuming homogeneity). Thus, we obtain:
$$[st_1, st_2, s t_3] = m^{(4)}_A \circ (s \otimes t_1 \otimes t_2 \otimes t_3 )\circ m^{135}_X \circ (\Delta^{(2)}\otimes \Delta^{(2)} \otimes \Delta^{(2)})\circ \Delta^{(3})\,.$$
Insisting on left distributivity requires $(\Id \otimes \Delta^{(3)})\circ \Delta^{(2)} = m_X^{135} \circ (\Delta^{(2)}\otimes \Delta^{(2)}\otimes \Delta^{(2)} ) \circ \Delta^{(3)}$, which is precisely \eqref{eqn:Con6}.
\end{proof}
Selecting privileged elements of an affine supertruss, such as a unit or absorber, requires care.  First, recall that for every superalgebra, there is a unique superalgebra homomorphism (the canonical inclusion)
$$!_A : \mathbb{K} \hookrightarrow A\, ,$$
defined as $\mathbb{K} \ni k1 \mapsto k 1_A \in A$. Moreover, given a $\psi \in \Hom_{\catname{SALg}_{\mathbb{K}}}(A, B)$ we have $\psi \circ !_A = !_B$.  Specifying a superalgebra homomorphism  $\varepsilon : X \rightarrow \mathbb{K}$ defines an element $!_A \circ \varepsilon \in T(A)$. In other words, $T(A)$ is canonically a pointed set.   Under $T(\psi) : T(A) \rightarrow T(B)$, we observe that 
$$!_A \circ \varepsilon \mapsto \psi \circ !_A \circ \varepsilon = !_B \circ \varepsilon\,.$$
Thus, we have a well-behaved notion of ``selecting elements'' of $T$.\par  
A \emph{counit} is a superalgebra homomorphism $\varepsilon_{\textrm{unit}}: X \rightarrow \mathbb{K}$ that satisfy the condition
\begin{equation}\label{eqn:Counit}
(\varepsilon_{\textrm{unit}} \otimes \Id )\circ \Delta^{(2)} = \Id = (\Id \otimes \varepsilon_{\textrm{unit}}) \circ \Delta^{(2)}\,.
\end{equation}
If a counit exits, then every truss $T(A)$ is a unital truss with $e := \,!_A \circ \varepsilon_{\textrm{unit}}$. We will refer to an affine supertruss such that the representing superalgebra has a counit as an \emph{affine unital supertruss}.  \par 
A \emph{cozero} (or \emph{coabsorber}) is a superalgebra homomorphism $\varepsilon_{\textrm{zero}}: X \rightarrow \mathbb{K}$ that satisfy the condition
\begin{equation}\label{eqn:Cozero}
(\Id \otimes \varepsilon_{\textrm{zero}} )\circ \Delta^{(2)} = \,!_X \circ \varepsilon_{\textrm{zero}} = (\varepsilon_{\textrm{zero}} \otimes \Id ) \circ \Delta^{(2)}\,.
\end{equation}
If a cozero exists, then every truss $T(A)$ has a zero element for the binary product $z := \,!_A \circ \varepsilon_{\textrm{zero}}$. \par 
\end{subequations} 
If the affine supertrusses are equipped with a unit and/or a zero, then we insist that each component of the homomorphisms respects these structures. To be careful, for any $A \in \catname{SAlg}_{\mathbb{K}}$ and a natural transformation $\phi_- :  T(-) \rightarrow T'(-)$, we have 
\begin{equation}
\phi_A [s,t,u]_T = [\phi_A(s), \phi_A(t), \phi_A(u)]_{T'}\,, \qquad \phi_A(st) = \phi_A(s)\phi(t)\,,
\end{equation}
for all $s,t$ and $u \in T(A)$. Insisting on preserving units and zero means 
\begin{equation}
\phi_A(e) = e'\,, \qquad \phi_A(z) = z'\,.
\end{equation} 
We define, as standard, $J_X$ as the ideal in $X$ generated by the odd elements of $X$. The reduced algebra is then $X_{\mathrm{red}}:= X / J_X$.  Given an affine supertruss $T$ and its representing superalgbera $X$, then the \emph{reduced affine truss} is defined (up to isomorphism) as 
$$T_{\mathrm{red}}(-) \cong \Hom_{\catname{SAlg}_{\mathbb{K}}}(X_{\mathrm{red}}, -)\,,$$ 
together with the restricted maps $\Delta^{(2)}_{\mathrm{red}} := \Delta^{(2)}|_{X_{\mathrm{red}}}$ and $\Delta^{(3)}_{\mathrm{red}} := \Delta^{(3)}|_{X_{\mathrm{red}}}$. By considering $\catname{Alg}_\mathbb{K}$ as a full subcategory of $\catname{SAlg}_{\mathbb{K}}$, the reduced affine truss is a truss object in the category of affine schemes.  \par 
We can recover  classical trusses by considering $T(\mathbb{K}) \cong \Hom_{\catname{SAlg}_\mathbb{K}}(X, \mathbb{K})$. That is, elements of $T(\mathbb{K})$ are in one-to-one correspondence with the $\mathbb{K}$-points of the affine scheme defined by $X_{\mathrm{red}}$. 
\begin{definition}
Let $T \in \catname{ASTru}_{\mathbb{K}}$ be an affine supertruss. The \emph{underlying truss} of $T$ is defined as 
$$T_\mathbb{K} :=  T_\mathrm{red}(\mathbb{K}) \cong \Hom_{\catname{Alg}_\mathbb{K}}(X_\mathrm{red} , \mathbb{K})\in \catname{Truss}\,.$$
\end{definition}
The forgetful functors \eqref{eqn:ForFunc} extend to affine supertrusses in the obvious way.
\begin{proposition}
There exist forgetful functors
\begin{align*}
& \mathcal{U} : \catname{ASTrus}_\mathbb{K} \rightarrow [\catname{SAlg}_\mathbb{K}, \catname{AbHeap}]_{\textrm{rep}}\,, 
&& \mathcal{V} : \catname{ASTrus}_\mathbb{K} \rightarrow [\catname{SAlg}_\mathbb{K}, \catname{SemiGrp}]_{\textrm{rep}}\,.
\end{align*}
\end{proposition}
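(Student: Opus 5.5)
The plan is to define $\mathcal{U}$ and $\mathcal{V}$ by post-composition with the classical forgetful functors $U : \catname{Truss} \rightarrow \catname{AbHeap}$ and $V : \catname{Truss} \rightarrow \catname{SemiGrp}$ of \eqref{eqn:ForFunc}. For an affine supertruss $T : \catname{SAlg}_\mathbb{K} \rightarrow \catname{Truss}$, set $\mathcal{U}(T) := U \circ T$ and $\mathcal{V}(T) := V \circ T$. For a morphism $\phi_- : T \rightarrow T'$, set $\mathcal{U}(\phi)_A := U(\phi_A)$ and $\mathcal{V}(\phi)_A := V(\phi_A)$; this is whiskering a natural transformation by a functor.

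The key steps, in order, are as follows.

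First, $U \circ T$ and $V \circ T$ are functors $\catname{SAlg}_\mathbb{K} \rightarrow \catname{AbHeap}$ and $\catname{SAlg}_\mathbb{K} \rightarrow \catname{SemiGrp}$, being composites of functors.

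Second, whiskering preserves naturality. Each $U(\phi_A)$ is an abelian heap morphism, and each $V(\phi_A)$ is a semigroup morphism, since a truss morphism is in particular both. The naturality squares commute because $U$ and $V$ act as the identity on underlying maps.

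Third, functoriality of $\mathcal{U}$ and $\mathcal{V}$ (preserving identities and composition) follows from the fact that whiskering is compatible with vertical composition.

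The only point needing care is that the images land in the \emph{representable} subcategories. For this I use that $U$ and $V$ do not change underlying sets. If $T \cong h^X$ as functors to $\catname{Truss}$, then composing further with the forgetful functors to $\catname{Set}$ gives $\catname{Set}$-valued functors, and $U \circ T$ and $V \circ T$ are still represented by $X$. The abelian heap structure is encoded by $\Delta^{(3)}$ alone, via \eqref{eqn:TerAdd} and \crefrange{eqn:Con1}{eqn:Con4}. The semigroup structure is encoded by $\Delta^{(2)}$ alone, via \eqref{eqn:BiProd} and \eqref{eqn:Con5}. This is the content of the proof of Theorem \ref{trm:RepObs}, with the distributivity conditions simply dropped.

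So the representing object of $\mathcal{U}(T)$ is $(X, \Delta^{(3)})$ and that of $\mathcal{V}(T)$ is $(X, \Delta^{(2)})$. I expect no real obstacle beyond stating this representability check cleanly, provided "representable" for structure-valued functors is read as representability after forgetting to $\catname{Set}$, as in Definition \ref{def:AffSupTrus}.
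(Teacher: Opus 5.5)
Your proposal is correct and matches the paper's proof: both set $\mathcal{U}(T) := U\circ T$ and $\mathcal{V}(T) := V\circ T$, and both note that forgetting structure leaves the underlying $\catname{Set}$-valued functor unchanged, so representability is preserved. You also spell out two things the paper leaves implicit: the action on morphisms by whiskering, and the representing objects $(X,\Delta^{(3)})$ and $(X,\Delta^{(2)})$.
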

\begin{proof}
Using the forgetful functors \eqref{eqn:ForFunc} we define for any affine supertruss $T$  
$$\mathcal{U}(T) := U \circ T \,, \qquad \mathcal{V}(T) := V\circ T\,.$$
As we have the composition of functors, we have well-defined functors. As the functors $U$ and $V$ are forgetful, they do not affect the representability of the composition with an affine supertruss.    
\end{proof}
%
%
\subsection{Abelian Supergroups as Affine Supertrusses}
An affine abelian supergroup is a representable functor 
$$G : \catname{SAlg}_{\mathbb{K}} \longrightarrow \catname{AbGrp}\,,$$
its representation is a  Hopf superalgebra (see \cite[Chapter 11]{Carmeli:2011}). Let 
$$H : \catname{AbGrp} \longrightarrow \catname{Truss}\,, $$
be the `trussification' functor ($H$ comes from `heapification'). From a (set theoretical) group, it assigns the abelian heap operation $[a,b,c] := a-b+c$ and the binary product is the group product $ab := a+b$, where we write the group as additive.  Observe that $a[b,c,d] = a+b -c+d = (a+b)- (a+c)+ (a+d)$ is precisely the truss distribution rule. As the truss structure is built from the group structure alone, group homomorphisms are also truss homomorphisms.\par 
Given an affine abelian supergroup, we define the associated affine supertruss as the composition of functors 
\begin{equation}
T := H \circ G : \catname{SAlg}_{\mathbb{K}} \longrightarrow \catname{Truss}\,.
\end{equation} 
However, we must check the naturality of the truss operations. Given some superalgebra homomorphism $\psi : A \rightarrow B$, $\psi^G := G(\psi)$ is a group homomorphism $\psi^G : G(A) \rightarrow G(B)$. Then,
\begin{align*}
&\psi^G[s,t,u] = \psi^G(s-t+u) = \psi^G(s) - \psi^G(t) + \psi^G(u) = [\psi^G(s), \psi^G(t), \psi^G(u)]\,,\\
&\psi^G(st) = \psi^G(s+t) = \psi^G(s) + \psi^G(t) = \psi^G(s) \psi^G(t)\,.
\end{align*}
Thus, $\psi$ induces a truss homomorphism, and the construction of the truss operations is natural. \par 
Representability follows from the Hopf superalgebra representing the affine supergroup. In particular, we have the comultiplication $\Delta : X \rightarrow X \otimes X$ and the antipode $S : X \rightarrow X$. We then define the representing superalgebra as $(X, \Delta^{(2)} , \Delta^{(3)} )$ where 
$$\Delta^{(2)} := \Delta\,, \qquad \Delta^{(3)} := (\Id \otimes S \otimes \Id )\circ \Delta_2\,,$$
where $\Delta_2$ is the iterated comultiplication, i.e., $\Delta_2:=  (\Delta \otimes \Id)\circ \Delta$. We leave it as an exercise to the reader to check that the conditions of Theorem \ref{trm:RepObs} hold.
\begin{remark}
Although we chose the binary product as simply the group product, other choices are possible, including those specific to the group structure.  Some `canonical' choices here include  
\begin{enumerate}[i)]
\item Left-Zero Semigroup $st := s$;
\item Right-Zero Semigroup $st := t$; and
\item Trivial Semigroup $st :=0$,
\end{enumerate}
defined for $s, t, 0 \in G(A)$, where $0$ is the identity element. Importantly, these semigroup structures are natural and so can be used to define affine supertruss structures given an affine abelian supergroup.  
\end{remark}
\begin{example}\label{exa:TransSupG}
The affine supergroup $\mathbb{G}_a^{1|1}$ of translations of $\mathbb{K}^{1|1}$, is represented by the Hopf superalgebra $X \cong \mathbb{K}[x, \theta]$, with $\widetilde{x} = 0$ and $\widetilde{\theta} =1$, subject to $\theta^2 =0$, with the maps $\Delta(x) := x \otimes 1 + 1 \otimes x $, $\Delta(\theta) := \theta \otimes 1 + 1 \otimes \theta$, $\epsilon(x):=  \epsilon(\theta) :=0$,  $S(x) := - x$ and $S(\theta) := - \theta$. Then ``trussifying'' we obtain 
\begin{align*}
& \Delta^{(2)}(x) := x \otimes 1 + 1 \otimes x\,, && \Delta(\theta) := \theta \otimes 1 + 1 \otimes \theta\,,\\
& \Delta^{(3)}(x) := x \otimes 1 \otimes 1 - 1 \otimes x \otimes 1 + 1 \otimes 1 \otimes x\,, && \Delta^{(3)}(\theta) := \theta \otimes 1 \otimes 1 - 1 \otimes \theta \otimes 1 + 1 \otimes 1 \otimes \theta\,.
\end{align*}
\end{example}
%
%
\subsection{Superrings as Affine Supertrusses}
A superring $R \in \catname{SRng}_\mathbb{K}$ can be considered as a functor  $R(-) : \catname{SAlg}_\mathbb{K} \rightarrow \catname{Rng}$, via the `even rules' (see \cite[Section 1.7]{Deligne:1999} and/or \cite[Chaper 1.8]{Carmeli:2011}).  That is, we define the functor as
\begin{description}
\item[Objects] $R(A) :=  \big(R \otimes_{\mathbb{K}} A\big)_\mathbf{0}$, for all $A \in \catname{SAlg}_\mathbb{K}$; 
\item[Morphisms] $R(\psi):=  \Id_R \otimes \psi$, where $\psi : A \rightarrow B$.
\end{description} 
Note that $R(A) =  \big(R_\mathbf{0} \otimes_\mathbb{K} A_\mathbf{0}\big) \oplus \big(R_\mathbf{1} \otimes_\mathbb{K} A_\mathbf{1}\big)$. We will write basic elements of $R(A)$ as $s = (r \otimes a , \varrho \otimes \alpha)$.  The binary multiplication of is then given by 
\begin{equation}
st := (r_1 r_2 \otimes a_1 a_2 - \varrho_1 \varrho_2 \otimes \alpha_1 \alpha_2, r_1 \varrho_2 \otimes a_1 \alpha_2 + \varrho_1 r_2 \otimes \alpha_1 a_2)\,.
\end{equation}
We can compose with the `trussification' functor (see Lemma \ref{lem:Trussification}) to obtain the functor 
$$ \big(Tr \circ R \big)(-) : \catname{SAlg}_\mathbb{K} \rightarrow \catname{Truss}_\mathbf{0}\,,$$
which leads to the abelian heap operation
\begin{equation}
[s,t,u] = (r_1 \otimes a_1 - r_2 \otimes a_2 + r_3 \otimes a_3 , \varrho_1 \otimes \alpha_1 - \varrho_2 \otimes \alpha_2  + \varrho_3 \otimes \alpha_3)\,.
\end{equation}
Representability of the constructed functor is not automatic. We require that $R$ is finite-dimensional.  In this case, we define the dual $R^* := \Hom_{\catname{SVect}_\mathbb{K}}(R, \mathbb{K})$, and observe that $X:= \mathrm{Sym}(R^*)$ (the free supersymmetric algebra) as the structure of a supercoalgebra. Note, we have the fundamental isomorphism 
$$(R\otimes_\mathbb{K} A)_\mathbf{0} \cong \Hom_{\catname{SMod}}(R^*, A)\,,$$
and so we identify $R(A)$ with $\mathbb{K}$-linear grading preserving maps from $R^*$ to $A$. In particular, maps $R^*_\mathbf{0} \rightarrow A_\mathbf{0}$ correspond to elements of $R_\mathbf{0} \otimes_\mathbb{K} A_\mathbf{0}$, and maps $R^*_\mathbf{1} \rightarrow A_\mathbf{1}$ correspond to elements of $R_\mathbf{1} \otimes_\mathbb{K} A_\mathbf{1}$. \par 
On $X := \mathrm{Sym}(R^*)$, we have $\Delta_+$ and $S$, which are the comultiplication and antipode associated with the additive structure on $R$. We furthermore have $\Delta_\times$ associated with the binary product. Thus, the representing superalgebra is $(\mathrm{Sym}(R^*), \Delta^{(2)}, \Delta^{(3)} )$, where we set 
$$\Delta^{(2)} := \Delta_\times\,, \qquad \Delta^{(3)}:= (\Id \otimes S \otimes \Id)\circ (\Delta_+)_2\,,$$
 where $(\Delta_+)_2 =  (\Delta_+\otimes \Id)\circ \Delta_+$. 
\begin{example}
Consider the cohomology ring $R := H^*(S^2 \times S^3) \cong \Z[x, \theta]/(x^2, \theta^2)$, where we assign $\widetilde{x} =0$ and $\widetilde{\theta}=1$.  A general element of this ring is of the form $a + b \,x + c\, \theta + d\, x\theta$, where $a,b,c,d \in \Z$. we then define the dual generators  $v, w \in R^*_\mathbf{0}$ dual to $1$ and $x$, respectively, and $\zx, \eta \in  R^*_\mathbf{1}$ dual to $\theta$ and $x\theta$, respectively.  Then the representing superalgebra is $X \cong \Z[v,w,\zeta, \eta]$ together with 
\begin{align*}
& \Delta^{(2)}(v) = v \otimes v\,, && \Delta^{(2)}(w) = w \otimes v +  v \otimes w\,,\\
& \Delta^{(2)}(\zx) = \zx \otimes v + v \otimes \zx\,, && \Delta^{(2)}(\eta) = \eta \otimes v + v \otimes \eta + w \otimes \zx + \zx \otimes w\,, 
\end{align*}
and 
$$\Delta^{(3)}(\alpha) = \alpha \otimes 1 \otimes 1 - 1 \otimes \alpha \otimes 1 + 1\otimes 1 \otimes \alpha\,, $$
for any $\alpha \in \{ v,w, \zx, \eta\}$.
\end{example}

%
%
\subsection{An Equivalence of Categories}
Via general categorical nonsense, we expect an anti-equivalence of categories between affine supertrusses and their representing superalgebras. With this in mind, we make the following definition.
\begin{definition}
A \emph{supercotruss} is a triple $(X, \Delta^{(2)}, \Delta^{(3)})$ that satisfy the conditions \crefrange{eqn:Con1}{eqn:Con7}.  \emph{Homomorphisms of supercotrusses}  $\varphi : X \rightarrow X'$ are superalgbebra homomorphisms that satisfy
\begin{subequations}
\begin{align}
&\Delta^{' (2)} \circ \varphi = (\varphi \otimes \varphi)\circ \Delta^{(2)}\,,\\
&\Delta^{' (3)} \circ \varphi = (\varphi \otimes \varphi\otimes \varphi)\circ \Delta^{(3)}\,.
\end{align}
\end{subequations}
\end{definition}
The resulting category we denote as $\cO(\catname{ASTru}_{\mathbb{K}})$. If counits and/or cozeros exist, then it is further required that 
\begin{equation}
\varepsilon'_{\mathrm{unit}} \circ \varphi = \varepsilon_{\mathrm{unit}}\,, \qquad \varepsilon'_{\mathrm{zero}} \circ \varphi = \varepsilon_{\mathrm{zero}}\,.
\end{equation}
\begin{remark}
The notation $\cO(\catname{ASTru}_{\mathbb{K}})$ for the category of supercotrusses has been chosen to remind one that cotrusses should be thought of as the coordinate rings on some superspace in the same way we think of Hopf superalgebras as the coordinate rings of affine supergroups.
\end{remark}
\begin{theorem}\label{trm:EqCat}
There is an equivalence of categories between $\catname{ASTru}_{\mathbb{K}}^{\mathrm{op}} \simeq\cO(\catname{ASTru}_{\mathbb{K}})$.
\end{theorem}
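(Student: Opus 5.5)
The plan is to realise the equivalence by the Yoneda lemma, exactly as for affine supergroups and Hopf superalgebras. I will define a contravariant functor $\mathcal{F} : \cO(\catname{ASTru}_{\mathbb{K}}) \rightarrow \catname{ASTru}_{\mathbb{K}}$ on objects by $\mathcal{F}(X, \Delta^{(2)}, \Delta^{(3)}) := h^X = \Hom_{\catname{SAlg}_{\mathbb{K}}}(X,-)$. Each $h^X(A)$ carries the truss structure \eqref{eqn:BiProd}--\eqref{eqn:TerAdd}, and this is a truss by Theorem \ref{trm:RepObs}, read in the converse direction; its proof in fact shows that \crefrange{eqn:Con1}{eqn:Con7} \emph{imply} the truss axioms on $h^X(A)$. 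On a supercotruss morphism $\varphi : X \rightarrow X'$ I will set $\mathcal{F}(\varphi)_A : h^{X'}(A) \rightarrow h^X(A)$, $t \mapsto t\circ\varphi$.

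The first check is that $h^X(\psi)$, postcomposition with $\psi : A \to B$, is a truss morphism. This is immediate: $\psi$ is a superalgebra homomorphism, so $\psi\circ m^{(i)}_A = m^{(i)}_B\circ \psi^{\otimes i}$. Hence $h^X$ really is a functor to $\catname{Truss}$, and it is representable by construction.

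The second check is that $\mathcal{F}(\varphi)_A$ is a truss morphism. From $\Delta'^{(2)}\circ\varphi = (\varphi\otimes\varphi)\circ\Delta^{(2)}$ we get
$(st)\circ\varphi = m^{(2)}_A\circ(s\otimes t)\circ\Delta'^{(2)}\circ\varphi = m^{(2)}_A\circ(s\circ\varphi\otimes t\circ\varphi)\circ\Delta^{(2)}$,
and the ternary operation is handled the same way. Naturality in $A$ holds because pre- and post-composition commute. If counits or cozeros are present, the condition $\varepsilon'\circ\varphi=\varepsilon$ gives $!_A\circ\varepsilon'\circ\varphi = !_A\circ\varepsilon$, so units and zeros are preserved.

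Next I will show $\mathcal{F}$ is fully faithful. By Yoneda, natural transformations $h^{X'}\Rightarrow h^X$ of set-valued functors correspond bijectively to superalgebra maps $\varphi : X\to X'$, via $\varphi := \phi_{X'}(\Id_{X'})$. It remains to see that $\phi$ respects the truss structures if and only if $\varphi$ intertwines the comultiplications; this is the main step. The idea is to evaluate on the universal elements. Take $A = X'\otimes X'$ with the two canonical inclusions $\iota_1 = \Id\otimes 1$ and $\iota_2 = 1\otimes \Id$. Then $m^{(2)}_A\circ(\iota_1\otimes\iota_2)$ is the identity of $X'\otimes X'$, so $\iota_1\iota_2 = \Delta'^{(2)}$ in $h^{X'}(A)$. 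Naturality gives $\phi_A(\iota_i) = \iota_i\circ\varphi$, so
$\phi_A(\iota_1\iota_2) = \Delta'^{(2)}\circ\varphi$ and $\phi_A(\iota_1)\phi_A(\iota_2) = (\varphi\otimes\varphi)\circ\Delta^{(2)}$.
Equality of these two is therefore equivalent to multiplicativity of $\phi_A$ on the generic pair. Conversely, intertwining implies multiplicativity for all $A$ by the computation above. The same argument with $A = X'^{\otimes 3}$ and three inclusions handles $\Delta^{(3)}$, and the counit and cozero are treated by evaluating on $e$ and $z$ in $h^{X'}(\mathbb{K})$. The delicate point is the signs: the graded tensor product makes $\iota_1$ and $\iota_2$ supercommute, and I need that $m^{(2)}_A\circ(\iota_1\otimes\iota_2)$ is exactly the identity with no Koszul sign. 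That holds because $\iota_1(x)\iota_2(y) = (x\otimes 1)(1\otimes y) = x\otimes y$.

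Finally, $\mathcal{F}$ is essentially surjective by definition of $\catname{ASTru}_{\mathbb{K}}$. Any $T$ is isomorphic to some $h^X$ via a natural bijection of sets. Transporting the truss structure along this bijection makes it an isomorphism of affine supertrusses. Theorem \ref{trm:RepObs} then endows $X$ with the supercotruss structure, by applying the generic-element argument above to the identity morphism. A fully faithful, essentially surjective contravariant functor yields $\catname{ASTru}_{\mathbb{K}}^{\mathrm{op}}\simeq\cO(\catname{ASTru}_{\mathbb{K}})$.
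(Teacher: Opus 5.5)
Your proposal is correct and rests on the same idea as the paper's proof: Yoneda's lemma, together with $h^X\times h^X\cong h^{X\otimes X}$, matches the natural truss operations with $\Delta^{(2)},\Delta^{(3)}$, and matches truss-preserving natural transformations with intertwining superalgebra maps. The differences are in presentation only. You point the functor in the canonical direction $X\mapsto h^X$, so no choice of representing object is needed and essential surjectivity reduces to representability plus transport of structure. You also use the universal elements $\iota_1,\iota_2\in h^{X'}(X'\otimes X')$ to make explicit both the ``if and only if'' for morphisms and the converse half of Theorem~\ref{trm:RepObs} (truss axioms imply the cotruss conditions), which the paper summarises as ``translates via Yoneda's lemma''.
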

\begin{proof}
The proof essentially follows from the structural consequences of universal algebra and category theory. For completeness, we construct an explicit proof using the repeated application of Yoneda's lemma. For details of Yoneda's lemma see \cite[III.2]{MacLane:1988}, and for the equivalence of categories see \cite[IV.4]{MacLane:1988}. \par 
\noindent \emph{Correspondence of Objects:} Note that in $\catname{SAlg}_{\mathbb{K}}$ is the (graded)  tensor product. Thus, there are natural bijections 
\begin{align*}
 T(A) \times T(A) &\cong \Hom_{\catname{SAlg}_{\mathbb{K}}}(X, A)\times  \Hom_{\catname{SAlg}_{\mathbb{K}}}(X, A)\\
& \cong \Hom_{\catname{SAlg}_{\mathbb{K}}}(X\otimes X,  A)\,, \\
 T(A) \times T(A)  \times T(A) &\cong \Hom_{\catname{SAlg}_{\mathbb{K}}}(X, A)\times  \Hom_{\catname{SAlg}_{\mathbb{K}}}(X, A) \times \Hom_{\catname{SAlg}_{\mathbb{K}}}(X, A)\\ &\cong \Hom_{\catname{SAlg}_{\mathbb{K}}}(X\otimes X\otimes X,  A)\,. 
\end{align*}
As $T(A)$ is a truss, the corresponding operations $T(-)\times T(-) \rightarrow T(-)$ and $T(-) \times T(-) \times T(-) \rightarrow T(-)$ are natural transformations. Thus, via Yondea's lemma, these natural transformations are in one-to-one correspondence with unique morphisms in $\catname{SAlg}_{\mathbb{K}}$
\begin{align*}
& \Delta^{(2)} : X \rightarrow X \otimes X\,,  && \Delta^{(3)} : X \rightarrow X \otimes X\otimes X\,.  
\end{align*}
Then via Theorem \ref{trm:RepObs}, these maps must satisfy the conditions \crefrange{eqn:Con1}{eqn:Con7}, and thus $(X, \Delta^{(2)}, \Delta^{(3)})$ is a supercotruss. \par 
\medskip 
\noindent\emph{Correspondence of Morphisms:} A morphism in $\catname{ASTru}_{\mathbb{K}}$ $\phi_- : T(-) \rightarrow T'(-)$, by Yoneda's lemma, corresponds to a unique superalgebra homomorphism  $\varphi : X' \rightarrow X$ (note the reversal of the arrows).  Then  translates via Yoneda's lemma,  $\phi_A [s,t,u]_T = [\phi_A(s), \phi_A(t), \phi_A(u)]_{T'}$ and $\phi_A(st) = \phi_A(s)\phi(t)$ translates to 
\begin{align*}
& \Delta^{(2)} \circ \varphi = (\varphi \otimes \varphi) \circ \Delta^{'(2)}\,,
&& \Delta^{(3)} \circ \varphi = (\varphi \otimes \varphi\otimes \varphi ) \circ \Delta^{'(3)}\,,
\end{align*}
respectively.  \par 
\bigskip 
Thus, the assignment $T \mapsto (X, \Delta^{(2)}, \Delta^{(3)})$ extends to a fully faithful and essentially surjective  functor  $\mathcal{F} : \catname{ASTru}_{\mathbb{K}}^{\mathrm{op}} \rightarrow \cO(\catname{ASTru}_{\mathbb{K}})$, which establishes the equivalence of categories.  
\end{proof}
\begin{remark}
The reader should recall the well-known fundamental anti-equivalence between the category of affine group schemes over a field $\mathbb{K}$ and the category of commutative Hopf algebras over $\mathbb{K}$. Theorem \ref{trm:EqCat} is the appropriate generalisation of this to affine supertrusses. 
\end{remark}
%
%
\subsection{Constructing Two-sided Semi-braces}
We proceed to generalise Brzeziński's \cite[Lemma 3.9]{Brzezinski:2020} construction of two-sided semi-braces from unital trusses.  We assume that the representing supercotruss  $X$ of an affine supertruss $T$ comes with a counit $\varepsilon_{\mathrm{unit}}: X \rightarrow \mathbb{K}$. Thus, for any superalgebra $A \in \catname{SAlg}_{\mathbb{K}}$,  the truss $T(A)$ is a unital truss with $e := !_A \circ \varepsilon_{\mathrm{unit}}$. Then directly from  \cite[Lemma 3.9]{Brzezinski:2020} we have 
\begin{lemma}\label{lem:Brace}
Let $T$ be an affine unital supertruss, then for every  $A \in \catname{SAlg}_{\mathbb{K}}$, $T(A)$ is a two-sided semi-brace under the binary product and the abelian group operation
$$t+_e u := [t, e, u]\,, \qquad -_e t := [e, t, e]\,,$$
for all $t,u \in T(A)$.
\end{lemma}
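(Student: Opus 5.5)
The plan is to reduce the statement to Brzeziński's classical result \cite[Lemma 3.9]{Brzezinski:2020}, applied pointwise to each truss $T(A)$. The only super-specific input is that $e := \,!_A \circ \varepsilon_{\mathrm{unit}}$ really is a two-sided multiplicative unit of $T(A)$. Once that is in place, the semi-brace identities are purely set-theoretic consequences of the truss axioms, and those axioms already hold in $T(A)$ by Theorem \ref{trm:RepObs}.

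\textbf{Step 1: $e$ is a unit.} First I check that $e$ is a superalgebra homomorphism $X \to A$, hence an element of $T(A)$; this holds because it is a composite of homomorphisms. For $t \in T(A)$, I use \eqref{eqn:BiProd} to write
\[
et = m^{(2)}_A \circ (!_A \circ \varepsilon_{\mathrm{unit}} \otimes t) \circ \Delta^{(2)} .
\]
Since $!_A$ lands in $\mathbb{K}1_A$ and carries no parity, no Koszul signs arise. Multiplication by scalars gives $m^{(2)}_A \circ (!_A \otimes t) = t \circ (\varepsilon\text{-scalar action})$, so
\[
et = t \circ (\varepsilon_{\mathrm{unit}} \otimes \Id) \circ \Delta^{(2)} = t ,
\]
where the last equality is the counit axiom \eqref{eqn:Counit}. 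The identity $te = t$ follows in the same way from the other half of \eqref{eqn:Counit}. I expect this to be the only step needing care: one has to confirm that identifying $\mathbb{K}\otimes X \cong X$ and inserting scalars into $m^{(2)}_A$ is compatible with the graded tensor product. This works because $\mathbb{K}$ is purely even.

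\textbf{Step 2: the abelian group.} In any abelian heap, fixing $e$ makes $+_e$ an abelian group with neutral element $e$ and inverse $-_e t = [e,t,e]$. Associativity comes from \eqref{eq:HeapCons}, neutrality from $[t,e,e] = t = [e,e,t]$, and commutativity from the abelian property. The inverse property is $[t,e,[e,t,e]] = [[t,e,e],t,e] = [t,t,e] = e$.

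\textbf{Step 3: the semi-brace identity.} Using left distributivity (established in Theorem \ref{trm:RepObs}) together with $se = s$ from Step 1, I compute
\[
s(t +_e u) = [st, se, su] = [st, s, su].
\]
I then need to show that this equals $st -_e s +_e su$, that is, $[[st,e,[e,s,e]],e,su]$. Para-associativity together with $[e,s,e]$ being the inverse of $s$ reduces this to $[st,s,su]$. The right-hand identity follows symmetrically from right distributivity and $es = s$. Finally, the binary product on $T(A)$ is associative by \eqref{eqn:Con5}, so $T(A)$ is a two-sided semi-brace.
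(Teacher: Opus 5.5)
Your proof is correct and takes the paper's route: the paper obtains the lemma by applying Brzeziński's \cite[Lemma 3.9]{Brzezinski:2020} to each unital truss $T(A)$ with $e = \,!_A\circ\varepsilon_{\mathrm{unit}}$. Your Steps 1--3 make explicit what the paper leaves implicit, namely the counit computation via \eqref{eqn:Counit} showing that $e$ is a two-sided unit, and the classical heap-theoretic check of the abelian group and semi-brace identities.
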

\begin{proposition}\label{prop:BraceNat}
The construction of a two-sided semi-brace from an affine unital supertruss is natural.
\end{proposition}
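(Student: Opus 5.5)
To prove Proposition \ref{prop:BraceNat}, I will show that the assignment $A \mapsto Br(T(A))$, with $Br$ the braceification functor \eqref{eqn:BrFunc}, is a functor $\catname{SAlg}_{\mathbb{K}} \rightarrow \catname{SemiBrac}$. Concretely, this means showing that for every superalgebra homomorphism $\psi : A \rightarrow B$ the truss homomorphism $T(\psi) : T(A) \rightarrow T(B)$, $s \mapsto \psi\circ s$, is also a morphism of the two-sided semi-braces of Lemma \ref{lem:Brace}. I will also show that morphisms of affine unital supertrusses induce natural transformations between the resulting semi-brace valued functors. The plan is then to identify the construction with the composite $Br \circ T$, so that functoriality follows from composition of functors, exactly as in the proof of the forgetful functors $\mathcal{U}$ and $\mathcal{V}$.

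The key step, and the only non-formal one, is that $T(\psi)$ preserves the units $e_A := \,!_A\circ \varepsilon_{\mathrm{unit}}$. For this I will use the identity $\psi\circ !_A = \,!_B$, which gives
$$T(\psi)(e_A) = \psi \circ !_A \circ \varepsilon_{\mathrm{unit}} = \,!_B\circ\varepsilon_{\mathrm{unit}} = e_B\,.$$
Hence $T(\psi)$ is a morphism in $\catname{Truss}_1$, and not merely in $\catname{Truss}$. Once this is in hand, the semi-brace operations are preserved because they are built from truss data. Since $T(\psi)$ is a heap morphism,
$$T(\psi)(t +_{e_A} u) = T(\psi)[t,e_A,u] = [T(\psi)(t), e_B, T(\psi)(u)] = T(\psi)(t) +_{e_B} T(\psi)(u)\,.$$
The same argument gives $T(\psi)(-_{e_A}t) = -_{e_B}T(\psi)(t)$. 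Multiplicativity of $T(\psi)$ was already established, as $T(\psi)$ is a truss homomorphism. Compatibility with identities and composition is inherited from $T$ being a functor.

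For morphisms $\phi_- : T \rightarrow T'$ of affine unital supertrusses, each component $\phi_A$ is required to preserve units, $\phi_A(e) = e'$. The same computation then shows that $\phi_A$ is a semi-brace morphism. The naturality squares are literally those of $\phi_-$, because $Br$ leaves maps unchanged. This yields a functor $\catname{ASTru}^{1}_{\mathbb{K}} \rightarrow [\catname{SAlg}_{\mathbb{K}}, \catname{SemiBrac}]$, $T \mapsto Br\circ T$. Representability is unaffected, since the underlying set-valued functor is still $h^X$.

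I expect the main obstacle to be conceptual rather than computational: the unit must be chosen uniformly in $A$. The pointed-element discussion via $\varepsilon_{\mathrm{unit}}$ and $!_A$ addresses exactly this, and it is the one place where the argument could fail for a non-canonical choice of point.
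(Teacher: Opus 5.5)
Your proposal is correct and follows the paper's proof. Both push $t +_e u = [t,e,u]$ and $-_e t = [e,t,e]$ through the heap morphism $T(\psi)$ and then use that $T(\psi)$ sends the unit of $T(A)$ to the unit of $T(B)$. You justify that last step explicitly via $\psi\circ !_A = \,!_B$, where the paper only asserts it (relying on its earlier discussion of selecting points), and you also spell out the extension to morphisms of affine unital supertrusses, which the paper only remarks on after the proof.
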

\begin{proof}
Lemma \ref{lem:Brace} establishes the construction. Let $\psi : A \rightarrow B $ be a superalgebra homomorphism.  Then, $\psi^T := T(\psi)$ is a truss homomorphism $\psi^T : T(A) \rightarrow T(B)$ by the definition of an affine supertruss. Then we observe that 
\begin{align*}
& \psi^T(t +_e u) = \psi^T [t,e,u] = [\psi^T(t),\psi^T(e),\psi^T(u)] = \psi^T(t) +_{\psi^T(e)} \psi^T(u)\,,\\
& \psi^T(-_e t) = \psi^T[e,t,e] = [\psi^T(e),\psi^T(t),\psi^T(e)] = -_{\psi^T(e)} \psi^T(t)\,.
\end{align*}
As $\psi^T(e)$ is the unit element in $T(B)$, we observe that $\phi$ induces a homomorphism of two-sided braces, and hence the construction is natural. 
\end{proof}
\begin{remark}
The same natural construction can be repeated for a cozero, i.e., a zero in each $T(A)$. In doing so, we recover the notion of a superring within this categorical approach to superalgebraic structures.  In particular, we have a functor $Rn\circ T : \catname{SAlg}_\mathrm{K}\rightarrow \catname{Rng}$ (see \eqref{eqn:RnFunc}). The superring (up to isomorphism) can be recovered from the evaluation of the functor on Grassmann algebras over $\mathbb{K}$.   We will not elaborate further as the objects obtained are standard. 
\end{remark}
Given an affine unital supertruss, we refer to the induced two-sided semi-brace as an \emph{affine superbrace}, understood as a composition of functors (see \eqref{eqn:BrFunc}) 
$$Br \circ T : \catname{SAlg}_{\mathbb{K}} \rightarrow \catname{SemiBrac}\,,$$
where the target category is the category of two-sided semi-braces. Note that any natural transformation $\phi_- : T(-) \rightarrow T'(-)$ has components that are unital truss homomorphisms, and so defines a homomorphism of two-sided semi-braces. \par 
Looking towards solutions of the set-theoretic Yang--Baxter equation from an affine superbrace constructed from an affine unital supertruss, we proceed as follows.  A solution we understand as a natural transformation 
\begin{equation}
r_- : T(-)\times T(-) \longrightarrow T(-)\times T(-)\,,
\end{equation}
such that any component $r_A(s,t) := (\lambda_s(t), \rho_t(s))$ ($A \in  \catname{SAlg}_{\mathbb{K}}$) is a solution to the set-theoretic Yang--Baxter equation (see \eqref{eqn:YB1},  \eqref{eqn:YB2} and \eqref{eqn:YB3}).  As we are dealing with natural transformations, we do not have a single solution to a single Yang--Baxter equation; rather, we have families parametrised by superalgebras.  More correctly, we pass from the set-theoretic to the Yang--Baxter equation in the setting of affine superschemes.  Thus, by minor abuse of language, we refer to $r_-$, and any component $r_A$, as a \emph{Yang--Baxter map}. \par 
Recalling that the Yoneda embedding is a fully faithful functor, thus, dually if $r$ is a Yang--Baxter map, then  the corresponding $r^* : X \otimes X \rightarrow X \otimes X$ also satisfies the set-theoretic Yang--Baxter equation.  Thus, it is a matter of taste if one prefers to work with $r_-$ or $r^*$, however in practice it may be convenient to work with  the functorial picture, as we will shortly demonstrate.  \par
Concretely, one can build $\lambda$ and $\rho$ by choosing various standard forms from the classical set-theoretical theory (see, for example, \cite{Vendramin:2024}), and as these are built from semi-brace operations,  Proposition \ref{prop:BraceNat} tells us the construction is natural. Simple and well-known solutions include the `flip' $\lambda_s(t) = t$, $\lambda_t(s)=s$, and the `left-action' $\lambda_s(t) = st-_e se +_e e$, $\rho_t(s) = s$. We will present examples of Yang--Baxter maps in the next section.
\begin{remark}
The first generalisation of the Yang--Baxter equation to algebraic varieties was given by Etingof  \cite{Etingof:2003}. Grahovski et al. \cite{Grahovski:2016} constructed  Yang--Baxter maps between Grassmann extensions of algebraic varieties.
\end{remark}
%
%
\subsection{Simple Examples}
We now proceed to present examples of affine supertrusses illustrating the constructions found earlier in this note. We will examine supercotrusses that are finitely generated so we can present the constructions explicitly. However, in the general theory, there is no requirement for the superalgebras to be finitely generated. 
\begin{example}
Consider $X = \mathbb{K}$, which we consider as an algebra with a single generator $1$. The only possible choice here for the superalgebra homomorphisms is 
$$\Delta^{(2)}(1) = 1 \otimes 1\,, \qquad \Delta^{(3)}(1) = 1 \otimes 1 \otimes 1\,.$$
The affine supertruss represented is understood as follows. Note $T(A) \cong \Hom_{\catname{SAlg}_{\mathbb{K}}}(\mathbb{K}, A)$ consists of the single map $s$ defined by $s(1) = 1_A$.  Then we can read off the truss structure using the functor of points
$$s^2 = s \,, \qquad [s,s,s] =s\,.$$
In other words, each truss $T(A)$ is isomorphic to the truss consisting of a single element, also known as the trivial truss. 
\end{example}
\begin{example}\label{exa:Kxtheta}
Consider $X =\mathbb{K}[x,\theta]$, where $\widetilde{x} =0$ and $\widetilde{\theta} =1$ are both formal variables, and $\theta$ is taken as a Grassmann generator, i.e., $\theta^2=0$.    We define the  binary and ternary superalgebra homomorphisms by their action on the generators
\begin{align*}
&\Delta^{(2)}(x) = x \otimes x + \theta \otimes \theta\,, && \Delta^{(2)}(\theta) = x \otimes \theta + \theta \otimes x\,, \\
&\Delta^{(3)}(x) = x \otimes 1\otimes 1 - 1 \otimes x \otimes 1 + 1 \otimes 1 \otimes x\,, && \Delta^{(3)}(\theta) = \theta \otimes 1\otimes 1 - 1 \otimes \theta \otimes 1 + 1 \otimes 1 \otimes \theta\,.
\end{align*} 
See Example \ref{exa:TransSupG} for the origin of this structure. The affine supertruss represented is understood as follows. Note that $s \in T(A)$ can be completely described by its action on the generators, thus we write $s = (\mathsf{x}, \vartheta)$, with $\mathsf{x} := s(x)$ and $\vartheta := s(\theta)$.  Then, examining the above binary and ternary superalgebra homomorphisms, we can read off the binary product and abelian heap operations
\begin{align*}
& st = (\mathsf{x}_1 \mathsf{x}_2+ \vartheta_1 \vartheta_2 , ~ \mathsf{x}_1\vartheta_2 + \vartheta_1 \mathsf{x}_2)\,, &&[s,t,u] = (\mathsf{x}_1-\mathsf{x}_2+\mathsf{x}_3 , ~ \vartheta_1 - \vartheta_2 +\vartheta_3 )\,,
\end{align*}
where the operations on the right-hand side of the expressions are in the superalgebra $A$. \par 
The counit and cozero are given by
\begin{align*}
 & \varepsilon_{\textrm{unit}}(x) = 1\,, && \varepsilon_{\textrm{unit}}(\theta) = 0\,,\\ 
& \varepsilon_{\textrm{zero}}(x) = 0\,, && \varepsilon_{\textrm{zero}}(\theta) = 0\,.
\end{align*}
Then we observe that 
$$e = (1_A , 0_A)\,, \qquad z = (0_A , 0_A)\,.$$
The associated affine superbrace is given by
$$t +_e u := (\mathsf{x}_2 - 1_A + \mathsf{x}_3 , \vartheta_2 + \vartheta_3 )\,, \qquad -_e u := (2 \, 1_A- \mathsf{x}_3, - \vartheta_3)\,,$$
and the double-sided brace  distribution rules are 
\begin{align*}
& s(t+_e u) := (\mathsf{x}_1 \mathsf{x}_2 - \mathsf{x}_1 + \mathsf{x}_1 \mathsf{x}_3 + \vartheta_1 \vartheta_2+ \vartheta_1\vartheta_3, \, \mathsf{x}_1 \vartheta_2 + \mathsf{x}_1 \vartheta_3 +\vartheta_1 \mathsf{x}_2 -\vartheta_1 + \vartheta_1 \mathsf{x}_3)\,, \\
& (t +_e  u) s := (\mathsf{x}_2 \mathsf{x}_1 - \mathsf{x}_1 + \mathsf{x}_3 \mathsf{x}_1 + \vartheta_2 \vartheta_1 +\vartheta_3 \vartheta_1 , \, \mathsf{x}_2 \vartheta_1 - \vartheta_1 + \mathsf{x}_3 \vartheta_1 + \vartheta_2 \mathsf{x}_1 + \vartheta_3 \mathsf{x}_1)\,.
\end{align*}
As an example of a Yang--Baxter map, we use the `left-action' and define $r_A := (\lambda_s(t), \rho_t(s))$ with
$$\lambda_s(t) := (\mathsf{x}_1 \mathsf{x}_2 + \vartheta_1 \vartheta_2 - \mathsf{x}_1 + 1_A,  \mathsf{x}_1 \vartheta_2 + \vartheta_1 \mathrm{x}_2 - \vartheta_1)\,, \qquad \rho_t(s) := (\mathsf{x}_1 , \vartheta_1)\,.$$
To construct another example of a Yang--Baxter map, we define  the parity involution as $\alpha(s) := (\mathsf{x}_1, - \vartheta_1)$. Note that $\alpha$ is an automorphism of the affine superbrace. Thus, we can build a Yang--Baxter map using a `superflip'  $r'_A(s,t) := (\alpha(t), \alpha(s))$, more explicitly
$$\lambda'_s(t) := \alpha(t) = (\mathsf{x}_2 , -\vartheta_2)\,, \qquad \rho'_t(s) := \alpha(s) := (\mathsf{x}_1 , -\vartheta_1) \,.$$
Note that the reduced Yang--Baxter map is just the classical flip. We can combine these two maps and define $r''_A := (\alpha(\lambda_s(t)) , \alpha(\rho_s(t)))$, explicitly 
$$\alpha(\lambda_s(t)) = (\mathsf{x}_1 \mathsf{x}_2 + \vartheta_1 \vartheta_2 - \mathsf{x}_1 + 1_A,  -\mathsf{x}_1 \vartheta_2 - \vartheta_1 \mathrm{x}_2 + \vartheta_1)\,, \qquad \alpha(\rho_t(s)) = (\mathsf{x}_1 , -\vartheta_1)\,.$$
\end{example}
\begin{example}
Consider $X =\mathbb{K}[x,x^{-1},\theta]$, where $\widetilde{x} =0$, $\widetilde{x^{-1}} = 0$, and $\widetilde{\theta} =1$ are both formal variables, and $\theta$ is taken as a Grassmann generator, i.e., $\theta^2=0$. By definition we have $x x^{-1} = x^{-1} x = 1$.  We define the abelian multiplicative supergroup via its functor of points 
$$G(A) \cong \Hom_{\catname{SAlg}_{\mathbb{K}}}(X, A)\,,$$
for some test superalgebra $A \in \catname{SAlg}_{\mathbb{K}}$.  We define $s = (\mathsf{x}, \vartheta)$, where $s(x) = \mathsf{x}$ and $s(\theta) = \vartheta$.  The group structure is
$$st = (\mathsf{x}_1 \mathsf{x}_2 , \mathsf{x}_1 \vartheta_2 + \vartheta_1 \mathsf{x}_2)\,, \qquad s^{-1} = (\mathsf{x}^{-1}, - \mathsf{x}^{-2} \vartheta)\,.$$
The associated affine supertruss is then described by the heap operation 
$$[s,t,u] = ( \mathsf{x}_1 \mathsf{x}^{-1}_2 \mathsf{x}_3 , \, \mathsf{x}_1\mathsf{x}^{-1}_2 \vartheta_3  - \mathsf{x}_1 \mathsf{x}^{-2}_2 \vartheta_2 \mathsf{x}_3  + \vartheta_1 \mathsf{x}^{-1}_2 \mathsf{x}_3)\,,$$
together with the group multiplication. Then we observe that 
$$e = (1_A , 0_A)\,, $$
and the associated affine superbrace is given by
$$t +_e u := (\mathsf{x}_2 \mathsf{x}_3 , \mathsf{x}_2\vartheta_3 + \vartheta_2 \mathsf{x}_3 )\,, \qquad -_e u := (\mathsf{x}^{-1}_3, - \mathsf{x}^{-2}_3 \vartheta_3)\,,$$
and the double-sided brace  distribution rules are 
$$s(t+_e u) = (t+_e u)s = (\mathsf{x}_1 \mathsf{x}_2 \mathsf{x}_3, \, \mathsf{x}_1 \mathsf{x}_2 \vartheta_3 + \mathsf{x}_1 \vartheta_2 \mathsf{x}_3 + \vartheta_1 \mathsf{x}_2 \mathsf{x}_3)\,.$$
The parity involution we define as $\alpha(s) := (\mathsf{x}_1, - \vartheta_1)$. Note that $\alpha$ is an automorphism of the affine superbrace. Thus, we can build a  Yang--Baxter map using a `superflip'  $r_A(s,t) := (\alpha(t), \alpha(s))$, more explicitly
$$\lambda_s(t) := \alpha(t) = (\mathsf{x}_2 , -\vartheta_2)\,, \qquad \rho_t(s) := \alpha(s) := (\mathsf{x}_1 , -\vartheta_1) \,.$$
Note that the reduced map is just the classical flip. As another simple example of a Yang--Baxter map, we set $r'_A(s,t) := (t^{-1}, s^{-1}) $, explicitly
$$\lambda'_s(t) := (\mathsf{x}^{-1}_2 , -\mathsf{x}_1^{-2}\vartheta_2)\,, \qquad \rho'_t(s) :=  (\mathsf{x}^{-1}_1 , -\mathsf{x}_2^{-2}\vartheta_1) \,.$$
As a further example, let us select a $q\in \mathbb{K}^\times$ and define $\sigma_q(\mathsf{x}, \vartheta) := (\mathsf{x}, q\, \vartheta)$. Note this `deformation map' respects the binary product.  We then set $r''_A(s,t) := (\sigma_q(t), \sigma_{q^{-1}}(s))$, explicitly
$$\lambda''_s(t) := (\mathsf{x}_2 , q \,\vartheta_2 )\,, \qquad \rho''_t(s) := (\mathsf{x}_1 , q^{-1} \,\vartheta_1 )\,.$$ 
\end{example}
\begin{example}
Consider $X =\mathbb{K}[x,\theta^1, \cdots , \theta^n]$, where $\widetilde{x} =0$ and $\widetilde{\theta}^i =1$ are both formal variables, and $\theta^i$ is taken as a Grassmann generator, i.e., $(\theta^i)^2=0$, here $i=1,2, \cdots n$.    We define the  binary and ternary superalgebra homomorphisms by their action on the generators
\begin{align*}
&\Delta^{(2)}(x) = x \otimes x + \theta^i \otimes \theta^j \, \delta_{ji} \,, && \Delta^{(2)}(\theta^i) = x \otimes \theta^i + \theta^i \otimes x\,, \\
&\Delta^{(3)}(x) = x \otimes 1\otimes 1 - 1 \otimes x \otimes 1 + 1 \otimes 1 \otimes x\,, && \Delta^{(3)}(\theta^i) = \theta^i \otimes 1\otimes 1 - 1 \otimes \theta^i \otimes 1 + 1 \otimes 1 \otimes \theta^i\,.
\end{align*} 
The affine supertruss represented is understood as follows. Note that $s \in T(A)$ can be completely described by its action on the generators, thus we write $s = (\mathsf{x}, \vartheta^i)$, with $\mathsf{x} := s(x)$ and $\vartheta^i := s(\theta^i)$.  Then, examining the above binary and ternary superalgebra homomorphisms, we can read off the binary product and abelian heap operations
\begin{align*}
& st = (\mathsf{x}_1 \mathsf{x}_2+ \vartheta^i_1 \vartheta^j_2\, \delta_{ji} , ~ \mathsf{x}_1\vartheta^i_2 + \vartheta^i_1 \mathsf{x}_2)\,, &&[s,t,u] = (\mathsf{x}_1-\mathsf{x}_2+\mathsf{x}_3 , ~ \vartheta^i_1 - \vartheta^i_2 +\vartheta^i_3 )\,,
\end{align*}
where the operations on the right-hand side of the expressions are in the superalgebra $A$. We leave it to the interested reader to complete this example following Example \ref{exa:Kxtheta}. 
\end{example}
\begin{example}
Consider $X =\mathbb{K}[x,\theta]$, where $\widetilde{x} =0$ and $\widetilde{\theta} =1$ are both formal variables, and $\theta$ is taken as a Grassmann generator, i.e., $\theta^2=0$.    We define the  binary and ternary superalgebra homomorphisms by their action on the generators
\begin{align*}
&\Delta^{(2)}(x) = x \otimes x  + x \otimes 1 + 1 \otimes x\,, && \Delta^{(2)}(\theta) = x \otimes \theta + \theta \otimes x + \theta \otimes 1 + 1 \otimes \theta \,, \\
&\Delta^{(3)}(x) = x \otimes 1\otimes 1 - 1 \otimes x \otimes 1 + 1 \otimes 1 \otimes x\,, && \Delta^{(3)}(\theta) = \theta \otimes 1\otimes 1 - 1 \otimes \theta \otimes 1 + 1 \otimes 1 \otimes \theta\,.
\end{align*} 
The affine supertruss represented is understood as follows. Note that $s \in T(A)$ can be completely described by its action on the generators, thus we write $s = (\mathsf{x}, \vartheta)$, with $\mathsf{x} := s(x)$ and $\vartheta := s(\theta)$.  Then, examining the above binary and ternary superalgebra homomorphisms, we can read off the binary product and abelian heap operations
\begin{align*}
& st = (\mathsf{x}_1 \mathsf{x}_2+ \mathsf{x}_1 + \mathsf{x}_2 , ~ \mathsf{x}_1\vartheta_2 + \vartheta_1 \mathsf{x}_2 + \vartheta_1 + \vartheta_2)\,, &&[s,t,u] = (\mathsf{x}_1-\mathsf{x}_2+\mathsf{x}_3 , ~ \vartheta_1 - \vartheta_2 +\vartheta_3 )\,,
\end{align*}
where the operations on the right-hand side of the expressions are in the superalgebra $A$.  We leave it to the interested reader to complete this example following Example \ref{exa:Kxtheta}. 
\end{example}
\begin{example}
So far, the examples have essentially been Grassmann algebras. For a more exotic example, consider $X =\mathbb{K}[x,\theta]/(x^2-1)$, where $\widetilde{x} =0$ and $\widetilde{\theta} =1$ are both formal variables, and $\theta$ is taken as a Grassmann generator, i.e., $\theta^2=0$.    We define the  binary and ternary superalgebra homomorphisms by their action on the generators
\begin{align*}
&\Delta^{(2)}(x) = x \otimes x\,, && \Delta^{(2)}(\theta) = x \otimes \theta + \theta \otimes 1  \,, \\
&\Delta^{(3)}(x) = x \otimes x \otimes x \,, && \Delta^{(3)}(\theta) = \theta \otimes 1\otimes 1 - 1 \otimes \theta \otimes 1 + 1 \otimes 1 \otimes \theta\,.
\end{align*} 
The affine supertruss represented is understood as follows. Note that $s \in T(A)$ can be completely described by its action on the generators, thus we write $s = (\mathsf{x}, \vartheta)$, with $\mathsf{x} := s(x)$ and $\vartheta := s(\theta)$.  Then, examining the above binary and ternary superalgebra homomorphisms, we can read off the binary product and abelian heap operations
\begin{align*}
& st = (\mathsf{x}_1 \mathsf{x}_2 , ~ \mathsf{x}_1\vartheta_2 + \vartheta_1)\,, &&[s,t,u] = (\mathsf{x}_1\mathsf{x}_2\mathsf{x}_3 , ~ \vartheta_1 - \vartheta_2 +\vartheta_3 )\,,
\end{align*}
where the operations on the right-hand side of the expressions are in the superalgebra $A$.  Note that as $x^2 =1$, $\mathsf{x}^{-1} = \mathsf{x}$, so the heap operation satisfies $\mathsf{x}_1 \mathsf{x}^{-1}_2 \mathsf{x}_3 = \mathsf{x}_1 \mathsf{x}_2 \mathsf{x}_3$.  We leave it to the interested reader to complete this example following Example \ref{exa:Kxtheta}. 
\end{example}
%
%
\subsection{Super-Adler's Map}
Using the previous subsection covering simple examples, we now proceed to describe super-versions of Adler's map (see \cite{Adler:1993}) within the framework of affine supertrusses.  \par 
The affine supercotruss we work with is $X = \mathbb{K}[x, x^{-1}, \theta] $, where $\widetilde{x} = 0$ and $\widetilde{\theta} =1$, and, of course, $\theta^2 =0$. The cotruss structure is  given by 
\begin{align*}
& \Delta^{(2)}(x) = x \otimes x + \theta \otimes \theta\,, && \Delta^{(2)} = x \otimes \theta + \theta \otimes x \, , \\ 
& \Delta^{(3)}(x) = x \otimes 1 \otimes 1 - 1 \otimes x \otimes 1 + 1 \otimes 1 \otimes x\,, && \Delta^{(3)}(\theta) = \theta \otimes 1 \otimes 1 - 1 \otimes \theta \otimes 1 + 1 \otimes 1 \otimes \theta\,.   
\end{align*} 
One can deduce that 
$$\Delta^{(2)}(x^{-1}) = x^{-1}\otimes x^{-1} - x^{-2}\theta \otimes \theta x^{-2}\,.$$
Defining $s = (\mathsf{x}, \vartheta)$, the truss structure is given by
\begin{subequations}
\begin{align}
& st = (\mathsf{x}_1 \mathsf{x}_2 + \vartheta_1 \vartheta_2 , \mathsf{x}_1 \vartheta_2 + \vartheta_1 \mathsf{x}_2)\,, \\
& [s,t,u] = (\mathsf{x}_1 - \mathsf{x}_2 + \mathsf{x}_3 , \vartheta_1 - \vartheta_2 + \vartheta_3)\,.
\end{align}
The inverse is given by
\begin{equation}
s^{-1} = (\mathsf{x}^{-1}_1, - \mathsf{x}^{-2}_1 \vartheta_1)\,.
\end{equation}
\end{subequations}
The counit is
$$\varepsilon_{\textrm{unit}}(x) = 1\,, \qquad \varepsilon_{\textrm{unit}}(\theta) = 0\,,$$
Then we observe that 
$$e = (1_A , 0_A)\,. $$
The associated affine superbrace is then defined using
\begin{equation}
s+_e t := (\mathsf{x}_1 + \mathsf{x}_2 -1_A , \vartheta_1+ \vartheta_2)\,, \qquad s-_e t  := (\mathsf{x}_1 - \mathsf{x}_2 + 1_A, \vartheta_1 - \vartheta_2)\,.
\end{equation}
To define a parameter in a natural way, we define $\varepsilon_{\textrm{const}} : X \rightarrow \mathbb{K}$ as $\varepsilon_{\textrm{const}}(x) = \gamma$ and $\varepsilon_{\textrm{const}}(\theta)=0$.  We then set $c =  (\gamma \, 1_A, 0_A) =: (\gamma, 0_A)$, for an arbitrary $A\in \catname{SAlg}_\mathbb{K}$.   We then employ the standard form of Adler's map 
$$\lambda_s(t) :=  t -_e c \, (s+_e t)^{-1}\,, \qquad \rho_t(s) :=  s +_e c \,(s+_e t)^{-1}\,.$$
From the classical theory, we know that $r_A = (\lambda_s(t), \rho_t(s) )$ is a Yang--Baxter map.  Written out we have 
\begin{subequations}
\begin{align} \label{eqn:SAdlerLam}
& \lambda_s(t) = \big(\mathsf{x}_2 + 1_A - \gamma\, (\mathsf{x}_1 + \mathsf{x}_2 - 1_A)^{-1}, \vartheta_2 + \gamma\, (\mathsf{x}_1 + \mathsf{x}_2-1_A)^{-2}(\vartheta_1 + \vartheta_2) \big)\,,\\ \label{eqn:SAdlerRho}
& \rho_t(s) = \big(\mathsf{x}_1 - 1_A - \gamma\, (\mathsf{x}_1 + \mathsf{x}_2 - 1_A)^{-1}, \vartheta_1 + \gamma\, (\mathsf{x}_1 +\mathsf{x}_2-1_A)^{-2}(\vartheta_1 + \vartheta_2) \big)\,.
\end{align}
\end{subequations}
The underlying classical map is defined by setting $A = \mathbb{K}$, we obtain 
$$\lambda_x(y) =: u =  y +1 - \gamma\, (x+y -1)^{-1}\,, \qquad \rho_y(x)=: v = x -1 + \gamma \, (x+y-1)^{-1}\,,$$
setting $\mathsf{x}_1 = x$ and $\mathsf{x}_2 = y$ in this case.  Shifting the variables and setting $x' = x-1$, $y' = y$, $u' = u -1$ and $v'=v$, recovers the standard form of Adler's map, i.e,
\begin{equation}
u' = y' - \gamma\, (x' +y')^{-1}\,, \qquad v' = x + \gamma \, (x' +y')^{-1}\,.
\end{equation}
\begin{remark}
The super-Alder's map presented above is distinct from the Grassmann extension of Adler's map of Konstantinou-Rizos \&  Mikhailov \cite{Konstantinou-Rizos:2016}.  In particular, the motivation and methodology are different.  Konstantinou-Rizos and Mikhailov construct their map analytically using a Lax matrix refactorisation problem with Grassmann-valued entries; the motivation was the study of noncommutative discrete potential KdV systems. The motivation here is algebraic and driven by the structure of affine supertrusses and superbraces.
\end{remark}
Note that as the parameter $c$ is pure even, there is no coupling in the even sectors of \eqref{eqn:SAdlerLam} and \eqref{eqn:SAdlerRho} between even and odd variables. We amend this by modifying the solution with a Drinfel'd twist.  We define the superalgebra map  $\omega : X \rightarrow X \otimes X$ as 
$$\omega(x) := 1 \otimes 1 + \gamma \, \theta \otimes \theta\,, \qquad \omega(\theta) :=0\,.$$
Then, for any $A \in \catname{SAlg}_\mathbb{K}$, $\omega$ corresponds to 
\begin{equation}
\Omega(s,t) = (1_A + \gamma\, \vartheta_1 \vartheta_2, 0_A)\,.
\end{equation}
Directly it can be shown that we have a $2$-cocycle, i.e., $\Omega$ satisfies 
$$\Omega(s,t) +_e \Omega(s+_e t,u) = \Omega(t,u)+_e \Omega(s, t+_e u)\,,$$
we omit the calculation as it is not illuminating. The twisted solutions are then  $\lambda'_s(t):= \lambda_s(t) +_e \Omega(s,t)$ and $ \rho'_t(s):= \rho_t(s) +_e \Omega(s,t)$, explicitly we have
\begin{subequations}
\begin{align}
& \lambda'_s(t)  = \big(\mathsf{x}_2 + 1_A - \gamma\, (\mathsf{x}_1 + \mathsf{x}_2 - 1_A)^{-1} + \gamma \vartheta_1 \vartheta_2, \vartheta_2 + \gamma\, (\mathsf{x}_1 + \mathsf{x}_2-1_A)^{-2}(\vartheta_1 + \vartheta_2) \big)\,,\\ 
& \rho'_t(s) = \big(\mathsf{x}_1 - 1_A - \gamma\, (\mathsf{x}_1 + \mathsf{x}_2 - 1_A)^{-1} + \gamma \vartheta_1 \vartheta_2, \vartheta_1 + \gamma\, (\mathsf{x}_1 +\mathsf{x}_2-1_A)^{-2}(\vartheta_1 + \vartheta_2) \big)\,.
\end{align}
\end{subequations}  
Note that the twist does not change the classical underlying map. 
%
%
\section{Concluding Remarks}
In this note, we generalised trusses to supermathematics by defining affine supertrusses and their representing superalgebras, which we refer to as  supercotruses; the latter are essentially  quantum heaps with a compatible comultiplication.  As `sanity checks', we have shown that abelian supergroups and (finite-dimensional)  superrings provide examples of affine supertrusses. Thus, the category is far from empty. Via affine supertrusses, we constructed affine superbraces as a super-version of Rump's two-sided braces. Illustrative, though conservative, examples of affine supertrusses were given, including their associated Yang--Baxter maps. As a potentially physically relevant example, a super-Adler's map has been presented via an affine supertruss/superbrace and a Drinfel'd  twist. \par 
\medskip
\noindent \textbf{Further Directions:}
\begin{enumerate}
\item  \emph{Paragons:} Brzeziński showed that the correct idea for quotients in the category of trusses is a paragon. Loosely, a paragon $P$ is a sub-heap of a truss $T$ that is closed under an appropriate multiplicative semigroup action which ensures that $T/P$ is itself a truss (see \cite[Proposition 3.22]{Brzezinski:2020}).  Extending this construction to affine supertrusses and their `superparagons' requires careful analysis. 
For a given $A \in \catname{SAlg}_\mathbb{K}$, the set-theoretical quotient $T(A)/P(A)$ will be a truss. However, quotients in affine supergeometry via functors are notoriously difficult; it is not automatically the case that the quotient of representable functors is representable. It is known that if $G$ is an affine algebraic supergroup and $H$ is a closed normal subsupergroup, the quotient $G/H$ is again an affine supergroup scheme. As paragons are analogous to normal subgroups, we conjecture that the quotient of an affine supertruss by one of its superparagons (assuming they exist) is representable.
\item \emph{Truss Modules:} Brzeziński defined truss modules as a heap-theoretic generalisation of a ring module (see \cite[Definition 4.1]{Brzezinski:2020}). Essentially, we have a representation of a truss acting on an abelian heap.  In the group setting, truss modules are akin to group representations.  The notion in the affine supertruss setting is intuitively clear; we have a representable functor $M(-) : \catname{SAlg}_{\mathbb{K}} \rightarrow \catname{AbHeap}$, together with a natural transformation $\triangleright_- : T(-) \times M(-) \rightarrow M(-)$, whose components satisfy the axioms of a truss module for each $A \in \catname{SAlg}_{\mathbb{K}}$. We stress that analysing rings as trusses allows for a richer category of modules. We speculate that, as cohomology rings are superrings, they can be considered as affine supertrusses; their associated truss modules may be of interest in algebraic topology and homological algebra.  Details of representability and the representing objects await exploration.
\item \emph{Super Yang--Baxter Maps:} The construction of an affine superbrace from an affine supertruss opens up the notion of Yang--Baxter maps in the setting of affine superschemes, a currently largely unexplored arena. This is independent of the constructions in this note. The notion of Yang--Baxter maps here can be formulated in terms of natural transformations or equivalently, in terms of superalgebra morphisms.    It is desirable to construct further and more involved examples of Yang--Baxter maps. In particular, it would be expedient to understand the Grassmann extensions of Yang--Baxter maps of Adamopoulou \&  Papamios \cite{Adamopoulou:2025}, Grahovski, Konstantinou-Rizos \&  Mikhailov \cite{Grahovski:2016}, Konstantinou-Rizos \cite{Konstantinou-Rizos:2020,Konstantinou-Rizos:2026}, and Konstantinou-Rizos \&  Mikhailov \cite{Konstantinou-Rizos:2016}, etc., within the framework suggested here; if that is generally possible. Nonetheless, we are confident that applications of affine supertrusses in integrable systems on lattices with bosonic and fermionic degrees of freedom may be found.
\end{enumerate}
%
%
\section*{Acknowledgements} 
The author thanks Tomasz Brzeziński for introducing him to ternary algebraic structures. Thanks are extended to Bernard Rybołowicz for reading earlier drafts of this work.
%
%
%
%

\end{document}